\newtheorem{theorem}{Theorem}[section]
\newtheorem{corollary}[theorem]{Corollary}
\newtheorem{remark}[theorem]{Remark}
\newenvironment{proof}[1][Proof.]{\begin{trivlist}
     \item[\hskip \labelsep {\bfseries #1}]}{\end{trivlist}}
\begin{document}


\title{Potentials of the Heun class}

\author{D. Batic}
\email{davide.batic@uwimona.edu.jm}
\author{R. Williams}%
\email{runako.williams@mymona.uwi.edu}
\affiliation{%
Department of Mathematics,\\  University of the West Indies, Kingston 6, Jamaica 
}%
\author{M. Nowakowski}
\email{mnowakos@uniandes.edu.co}
\affiliation{
Departamento de Fisica,\\ Universidad de los Andes, Cra.1E
No.18A-10, Bogota, Colombia
}%

\date{\today}

\begin{abstract}
We review different methods of generating potentials such that the one-dimensional Schr\"{o}dinger equation (ODSE) can be transformed 
into the hypergeometric equation. We compare our results with previous studies, and complement the subject with new findings.
Our main result is to derive new classes of 
potentials such that the ODSE can be transformed into 
the Heun equation and its confluent cases. 
The generalized Heun equation is also considered. 
\end{abstract}

\pacs{03.65.Ge, 02.30.lk, 02.30.Gp, 02.30.Hq, 02.90.+p}
\keywords{exactly solvable potentials, one-dimensional Schr\"{o}dinger equation, Heun equation and its confluent forms, generalized Heun equation}
\maketitle

\section{\label{Intro} Introduction}
Since the early days of Quantum Mechanics the problem of finding potentials such that the one-dimensional Schr\"{o}dinger 
equation (ODSE) 
\begin{equation}\label{schI}
\frac{d^2 u}{dx^2}+\left[k^2-V(x)\right]u(x)=0,\quad k^2=\frac{2mE}{\hbar^2},\quad x\in(-\infty,\infty)
\end{equation}
admits exact solutions attracted the attention of several physicists and mathematicians \cite{Levai}. The 
importance of the problem is directly related to the fact that an exact knowledge of the energy spectrum and of the 
wave functions of the ODSE turns out to be very useful in many applied problems \cite{Fl,Landau}. It is well-known that the ODSE can 
be solved analytically for a large number of potentials provided that we are able to find a coordinate transformation $y=y(x)$ and 
eventually a change of the dependent variable $u$ such that (\ref{schI}) can be transformed into a linear ordinary differential equation 
of mathematical physics. Here, we begin by reviewing the problem of reducing the ODSE to the hypergeometric equation
\begin{equation}\label{dueI}
y(1-y)\frac{d^2 v}{dy^2}+\left[c-(a+b+1)y\right]\frac{dv}{dy}-abv(y)=0,\quad y\in I\subset\mathbb{R}
\end{equation}
with $a,b,c\in\mathbb{C}$. \cite{Manning,Bhatta,Bose,Lam} considered only the case when the potentials depend 
explicitly on the variable $x$. This scenario has been further extended by \cite{Iwata,Natanzon}, where more general classes 
of  potentials exhibiting an implicit dependence on the coordinate $x$ have been derived. Note that the so-called Ginocchio potentials 
\cite{Gin} are a special case of the more general class of solvable potentials obtained by \cite{Natanzon}.  Moreover, \cite{Milson} derived a large 
class of potentials which include the Natanzon class by considering the case when a second order linear differential equation can be 
connected to a ODSE by means of Liouville transformations. We emphasize that the work of \cite{Milson} does not cover the case of the Heun 
equation, since the author assumed that the Bose invariant is a rational function with numerator represented by a 
polynomial of degree two or less. As we will see in Section~\ref{heunclass} such a numerator becomes a polynomial of degree four in the 
case of the Heun equation. Our main contribution is the derivation of the most general potential such that the ODSE can be transformed into the Heun equation \cite{Ronveaux}
\begin{equation}\label{Heun}
\frac{d^2 v}{dy^2}+\left(\frac{\gamma}{y}+\frac{\delta}{y-1}+\frac{\epsilon}{y-a}\right)\frac{dv}{dy}
+\frac{\alpha\beta y-q}{y(y-1)(y-a)}v(y)=0,\quad y\in I\subset\mathbb{R},
\end{equation}
where $a\in\mathbb{C}\backslash\{0,1\}$, $\epsilon=\alpha+\beta+1-\gamma-\delta$, and $\alpha$, $\beta$, $\gamma$, $\delta$, and $q$ 
are arbitrary complex parameters. For the role played by (\ref{Heun}) in General Relativity we refer to \cite{dav0}. Since the hypergeometric equation is a 
special case of the Heun equation with $q=\alpha\beta a$ and $\epsilon=0$, our potential generalizes the Natanzon class. We also 
treat the same problem for the confluent, biconfluent and triconfluent Heun equation. Moreover, we also find the most general class of 
potentials such that the ODSE can be transformed into the generalized Heun equation (GHE) \cite{ScSc}
\begin{equation}\label{GHE}
\frac{d^2 v}{dy^2}+\left(\frac{1-\mu_{0}}{y}+\frac{1-\mu_{1}}{y-1}+\frac{1-\mu_{2}}{y-\widehat{a}}+\widehat{\alpha}\right)
\frac{dv}{dy}+\frac{\beta_{0}+\beta_{1}y+\beta_{2}y^{2}}{y(y-1)(y-\widehat{a})}v(y)=0,
\end{equation}
where $\widehat{a}\in\mathbb{C}\backslash\{0,1\}$ and $\mu_{0}$, $\mu_{1}$, $\mu_{2}$, $\alpha$, $\beta_{0}$, $\beta_{1}$, $\beta_{2}$ 
are arbitrary complex numbers. To underline the importance of equation (\ref{GHE}) we recall that it contains the ellipsoidal wave 
equation as well as the Heun equation ($\alpha=\beta_{2}=0$) and thus the Mathieu, spheroidal, Lam$\acute{\mbox{e}}$, Whittaker-Hill 
and Ince equations as special cases. For physical applications of (\ref{GHE}) we refer to \cite{dav1}. Our work is organized as 
follows. In Section~\ref{iwa} we outline the derivation of the potential found by \cite{Iwata}, extend the findings thereof by computing two 
new classes of potentials and derive a non trivial condition under which the potential is symmetric. In Section~\ref{nata} we 
shortly review the derivation of Natanzon's potential and show that all Iwata's potentials are contained in the Natanazon class. 
In Section~\ref{heunclass} we compute the most general potential such that ODSE can be reduced to the Heun equation. 
This potential will be called potential of the Heun class. Starting from this result we also derive expressions for the potentials 
when the ODSE can be transformed into a confluent, doubly confluent, biconfluent and triconfluent 
Heun equation. We conclude our work by deriving the most general potential such that the ODSE can be transformed into the generalized 
Heun equation. Since the Heun equation is a special case of the GHE, we obtain a further generalization of the potentials belonging 
to the Heun class. A comment on the choice of parameters entering in the potentials derived in the next sections is in order. From here on, 
we will be concerned with real parameters in agreement with the choice of a real potential in the ODSE. However, the choice of complex 
parameters would also be possible in the ODSE when the potential is complex, a case which has some relevance in the physical literature and goes 
under the name of optical potential.

\section{Iwata's potential}\label{iwa}
We start as in \cite{Iwata} by restricting our attention to those potentials $V$ such that (\ref{schI}) can be reduced to (\ref{dueI})
after a suitable transformation of the coordinate $x$ and of the dependent variable $u$. If we transform (\ref{dueI}) according to
\begin{equation}\label{treI}
 y=y(x),\quad v(y)=s(x)u(x),
\end{equation}
we obtain
\begin{equation}\label{quattroI}
\frac{d^2 u}{dx^2}+\left[2\frac{s^{'}}{s}-\frac{y^{''}}{y^{'}}+\frac{c-(a+b+1)y}{y(1-y)}y^{'}\right]\frac{du}{dx}+
\left[\frac{s^{''}}{s}-\frac{y^{''}}{y^{'}}\frac{s^{'}}{s}+\frac{c-(a+b+1)y}{y(1-y)}\frac{s^{'}}{s}y^{'}-
\frac{ab(y^{'})^2}{y(1-y)}\right]u(x)=0. 
\end{equation} 
In the process of reducing (\ref{quattroI}) to (\ref{schI}) we must require that
\begin{equation}\label{cinqueI}
2\frac{s^{'}}{s}-\frac{y^{''}}{y^{'}}+\frac{c-(a+b+1)y}{y(1-y)}y^{'}=0.
\end{equation}
At this point (\ref{cinqueI}) can be used to eliminate the function $s$ in the last bracket in (\ref{quattroI}) and we 
end up with the equation
\[
\frac{d^2 u}{dx^2}+U(x)u(x)=0,
\]
where
\begin{equation}\label{doppiastella}
U(x)=T(x)+A(a,b)X(x)+B(a,b,c)Y(x)+C(c)Z(x) 
\end{equation}
with
\begin{equation}
\label{XYZ}
T(x)=\left(\frac{y^{''}}{2y^{'}}\right)^{'}-\left(\frac{y^{''}}{2y^{'}}\right)^2,\quad X(x)=\frac{(y^{'})^2}{4(1-y)^2},\quad Y(x)=\frac{(y^{'})^2}{4y(1-y)^2},\quad Z(x)=\frac{(y^{'})^2}{4y^2(1-y)^2},
\end{equation}
and
\begin{equation}\label{ABC}
A(a,b)=1-(a-b)^2,\quad B(a,b,c)=2[c(a+b-1)-2ab],\quad C(c)=2c-c^2.
\end{equation}
Since the parameters $a$, $b$ and $c$ may depend on the quantity $k$ defined in (\ref{schI}), we can assume without loss of generality that 
\begin{equation}\label{U}
U(x)=T(x)+A(k)X(x)+B(k)Y(x)+C(k)Z(x).
\end{equation}
Differentiating the above expression first with respect to $k$ and then with respect to $x$, 
we end up with the auxiliary equation
\begin{equation}\label{tilde}
\dot{A}X^{'}+\dot{B}Y^{'}+\dot{C}Z^{'}=0,
\end{equation}
where $\dot{}=d/dk$ and ${}^{'}=d/dx$. We must distinguish among the following three cases
\begin{enumerate}
\item 
The functions $X^{'}$, $Y^{'}$ and $Z^{'}$ are linearly independent. Then, $\dot{A}=\dot{B}=\dot{C}=0$ meaning that $A$, $B$ and $C$ are 
numerical constants. This case is not of practical relevance.
\item
Only one function is linearly independent. For instance, there exist constants, say $\rho$ and $\sigma$ such that 
$Y^{'}=\rho X^{'}$ and $Z^{'}=\sigma X^{'}$. A simple integration gives $Y=\rho X+c_1$ and $Z=\sigma X+c_2$. On the 
other side by means of (\ref{XYZ}) we find that the coordinate transformation $y$ should satisfy at the same time the 
differential equations
\[
\frac{1}{4(1-y^2)}\left(\frac{1}{y}-\rho\right)(y^{'})^2=c_1,\quad 
\frac{1}{4(1-y^2)}\left(\frac{1}{y^2}-\sigma\right)(y^{'})^2=c_2.
\]
This is the case if and only if $y^{'}=0$, but then $y$ is a constant function. Hence, this case must be disregarded.
\item
Two of the functions $X^{'}$, $Y^{'}$ and $Z^{'}$ are linearly independent. We have three distinct cases. 
\end{enumerate}
Let $Y^{'}$ and $Z^{'}$ be linearly independent. Then, there exist constants $\rho$ and $\sigma$ such that
\begin{equation}\label{duepunti}
X^{'}+\rho Y^{'}+\sigma Z^{'}=0. 
\end{equation}
By means of (\ref{duepunti}) we can rewrite (\ref{tilde}) as follows
\[
(\dot{B}-\rho\dot{A})Y^{'}+(\dot{C}-\sigma\dot{A})Z^{'}=0. 
\]
Since $Y^{'}$ and $Z^{'}$ are linearly independent, it must be $\dot{B}-\rho\dot{A}=0$ and $\dot{C}-\sigma\dot{A}=0$. Integrating 
these equations with respect to $k$ we obtain
\begin{equation}\label{stellad}
B=\rho A+\rho_1,\quad C=\sigma A+\sigma_1,\quad\rho_1,\sigma_1\in\mathbb{R}
\end{equation}
and $U$ can be expressed as follows
\begin{equation}\label{puntoesclamativo}
U(x)=T(x)+A\left[X(x)+\rho Y(x)+\sigma Z(x)\right]+\rho_1 Y(x)+\sigma_1 Z(x).
\end{equation}
If we integrate (\ref{duepunti}) with respect to $x$ and make use of (\ref{XYZ}), we discover that the coordinate 
transformation $y$ must satisfy the following first order nonlinear autonomous differential equation, namely
\begin{equation}\label{eq_y}
\frac{(y^{'})^2 R(y)}{4y^2 (1-y)^2}=\kappa, \quad\kappa\in\mathbb{R}\backslash\{0\},\quad R(y)=y^2+\rho y+\sigma.
\end{equation}
Taking into account that $U=k^2-V_I$ and applying (\ref{puntoesclamativo}), we find that
\[
V_I(x)=k^2+\left(\frac{y^{''}}{2y^{'}}\right)^2-\left(\frac{y^{''}}{2y^{'}}\right)^{'}
-A\left[X(x)+\rho Y(x)+\sigma Z(x)\right]-\rho_1 Y(x)-\sigma_1 Z(x).
\]
With the help of (\ref{XYZ}) and (\ref{eq_y}) it is straightforward to check that
\[
A\left[X(x)+\rho Y(x)+\sigma Z(x)\right]+\rho_1 Y(x)+\sigma_1 Z(x)=
\kappa\left[1-(a-b)^2\right]+\frac{\kappa(\rho_1 y+\sigma_1)}{R(y)}.
\]
Since the potential cannot depend on $k$, the parameters $a$ and $b$ must satisfy the following condition
\begin{equation}\label{zero}
(a-b)^2=-\frac{k^2}{\kappa}
\end{equation}
and the expression for the potential reduces to
\[
V_I(x)= \left(\frac{y^{''}}{2y^{'}}\right)^2-\left(\frac{y^{''}}{2y^{'}}\right)^{'}
-\kappa\frac{y^2+(\rho+\rho_1)y+\sigma+\sigma_1}{R(y)}.
\]
The first two terms on the r.h.s. of the previous expression can be computed as follows. First of all, let
\begin{equation}\label{f}
g(y)=\frac{4\kappa y^2(1-y)^2}{R(y)}. 
\end{equation}
Then, (\ref{eq_y}) can be written in the compact form $(y^{'})^2=g(y)$ from which we can easily derive the useful identities 
\[
y^{''}=\frac{1}{2}\frac{dg}{dy},\quad \frac{y^{'''}}{y^{'}}=\frac{1}{2}\frac{d^2 g}{dy^2}. 
\]
Taking into account that
\[
\left(\frac{y^{''}}{2y^{'}}\right)^2-\left(\frac{y^{''}}{2y^{'}}\right)^{'}=\frac{3}{4}\left(\frac{y^{''}}{y^{'}}\right)^2-
\frac{1}{2}\frac{y^{'''}}{y^{'}}=\frac{3(y^{''})^2}{4g}-\frac{1}{2}\frac{y^{'''}}{y^{'}},
\]
we conclude that
\[
V_I(x)=\frac{3}{16 g}\left(\frac{dg}{dy}\right)^2-\frac{1}{4}\frac{d^2g}{dy^2}-\kappa\frac{y^2+(\rho+\rho_1)y+\sigma+\sigma_1}
{R(y)}.
\]
with $y=y(x)$ and $g$ given by (\ref{f}). If we replace (\ref{f}) in the above expression, we obtain as in \cite{Iwata} 
\begin{equation}\label{potgen}
V_I(x)=\frac{\kappa y^2(1-y)^2}{R(y)}\left[\frac{1}{y^2(1-y)^2}+\frac{2}{R(y)}+\frac{(1-2y)(2y+\rho)}{y(1-y)R(y)}
-\frac{5(2y+\rho)^2}{4R^2(y)}\right] -\kappa\frac{y^2+(\rho+\rho_1)y+\sigma+\sigma_1}{R(y)}.
\end{equation}
We call $V_I$ Iwata's potential in order to distinguish it from a similar result obtained in \cite{Natanzon}. Note that 
(\ref{potgen}) reduces to the modified P\"{o}schl-Teller potential barrier \cite{Fl,Landau}
\begin{equation}\label{MPT}
V(x)=\frac{V_0}{\cosh^2{(\alpha x)}},\quad \alpha>0
\end{equation}
whenever
\[
\rho=-1,\quad \sigma=0,\quad \kappa=\alpha^2\quad \rho_1=-\frac{V_0}{\alpha^2},\quad\sigma_1=-\rho_1+\frac{3}{4}. 
\]
The potential $V_I$ will vanish identically whenever 
\[
\rho=-2,\quad \sigma=1,\quad \rho_1=\sigma_1=0, 
\]
\[
\rho=-1,\quad \sigma=0,\quad \rho_1=0,\quad \sigma_1=\frac{3}{4},
\]
or
\[
\rho=\sigma=\rho_1=\sigma_1=0.
\]
Relations connecting the parameters of the hypergeometric equation with those appearing in (\ref{potgen}) can be derived as 
follows. Since the energy of the particle must be real, from (\ref{zero}) we have $(a-b)^2<0$ if $\kappa>0$, and $(a-b)^2>0$ 
if $\kappa<0$. Thus, for $\kappa>0$ we obtain
\begin{equation}\label{c1}
a-b=i\frac{k}{\sqrt{\kappa}} 
\end{equation}
which implies that the real parts of $a$ and $b$ coincide. Moreover, for $\kappa<0$ we get
\[
a-b= \frac{k}{\sqrt{|\kappa|}},
\]
implying that the imaginary parts of $a$ and $b$ coincide. From (\ref{ABC}) and (\ref{stellad}) we have
\begin{eqnarray}
2[c(a+b-1)-2ab]&=&\rho[1-(a-b)^2]+\rho_1\label{tris},\\
2c-c^2&=&\sigma[1-(a-b)^2]+\sigma_1\label{quis}. 
\end{eqnarray}
Replacing (\ref{zero}) into (\ref{quis}) we get
\begin{equation}\label{c}
c=1+\sqrt{1-\sigma-\sigma_1-\sigma\frac{k^2}{\kappa}}.
\end{equation}
Taking into account that (\ref{tris}) can be manipulated as in \cite{Iwata} to give
\[
(1+\sigma+\rho)(a-b)^2-(a+b-c)^2=\rho+\rho_1+\sigma+\sigma_1, 
\]
we find that
\begin{eqnarray}
a&=&\frac{1}{2}\left[1+\sqrt{1-\sigma-\sigma_1-\sigma\frac{k^2}{\kappa}}+i\left(\sqrt{\rho+\rho_1+\sigma+\sigma_1+
\frac{1+\rho+\sigma}{\kappa}k^2}+\frac{k}{\sqrt{\kappa}}\right)\right],\label{a1}\\
b&=&\frac{1}{2}\left[1+\sqrt{1-\sigma-\sigma_1-\sigma\frac{k^2}{\kappa}}+i\left(\sqrt{\rho+\rho_1+\sigma+\sigma_1+
\frac{1+\rho+\sigma}{\kappa}k^2}-\frac{k}{\sqrt{\kappa}}\right)\right],\label{b1} 
\end{eqnarray}
if $\kappa>0$. In the case $\kappa<0$ we get
\begin{eqnarray}
a&=&\frac{1}{2}\left(1+\frac{k}{\sqrt{|\kappa|}}+\sqrt{1-\sigma-\sigma_1-\sigma\frac{k^2}{\kappa}}+i\sqrt{\rho+\rho_1+\sigma+\sigma_1+
\frac{1+\rho+\sigma}{\kappa}k^2}\right),\label{a2}\\
b&=&\frac{1}{2}\left(1-\frac{k}{\sqrt{|\kappa|}}+\sqrt{1-\sigma-\sigma_1-\sigma\frac{k^2}{\kappa}}+i\sqrt{\rho+\rho_1+\sigma+\sigma_1+
\frac{1+\rho+\sigma}{\kappa}k^2}\right)\label{b2}. 
\end{eqnarray}  
We point out that \cite{Iwata} treated only the case when $Y^{'}$ and $Z^{'}$ are linearly independent. In what follows 
we extend the work of \cite{Iwata} by considering two additional cases generating new classes of potentials. To this purpose, 
let $X^{'}$ and $Y^{'}$ be linearly independent. Then, there exist constants $\widetilde{\rho}$ and $\widetilde{\sigma}$ 
such that $Z^{'}+\widetilde{\rho} X^{'}+\widetilde{\sigma} Y^{'}=0$. This condition together with (\ref{tilde}) implies that 
$\dot{A}-\widetilde{\rho}\dot{C}=0$ and $\dot{B}-\widetilde{\sigma}\dot{C}=0$. Integrating these equations with respect to 
$k$ we get $A=\widetilde{\rho} C+\rho_2$ and $B=\widetilde{\sigma} C+\sigma_2$ with $\rho_2,\sigma_2\in\mathbb{R}$. At this 
point $U$ can be expressed as follows
\begin{equation}\label{puntoesclamativo1}
U(x)=T(x)+C\left[\widetilde{\rho}X(x)+\widetilde{\sigma} Y(x)+Z(x)\right]+\rho_2 X(x)+\sigma_2 Y(x).
\end{equation}
If we integrate (\ref{duepunti}) with respect to $x$ and make use of (\ref{XYZ}), we discover that the coordinate 
transformation $y$ must satisfy the following first order nonlinear autonomous differential equation, namely
\begin{equation}\label{eq_y1}
\frac{(y^{'})^2 \widetilde{R}(y)}{4y^2 (1-y)^2}=\widetilde{\kappa}, 
\quad\widetilde{\kappa}\in\mathbb{R}\backslash\{0\},\quad \widetilde{R}(y)=\widetilde{\rho}y^2+\widetilde{\sigma} y+1.
\end{equation}
Taking into account that $U=k^2-V_{II}$ and applying (\ref{puntoesclamativo}), we find that
\[
V_{II}(x)=k^2+\left(\frac{y^{''}}{2y^{'}}\right)^2-\left(\frac{y^{''}}{2y^{'}}\right)^{'}
-C\left[\widetilde{\rho}X(x)+\widetilde{\sigma} Y(x)+Z(x)\right]-\rho_2 X(x)-\sigma_2 Y(x).
\]
With the help of (\ref{XYZ}) and (\ref{eq_y1}) it is straightforward to check that
\[
C\left[\widetilde{\rho}X(x)+\widetilde{\sigma} Y(x)+Z(x)\right]+\rho_2 X(x)+\sigma_2 Y(x)=
\widetilde{\kappa}(2c-c^2)+\frac{\widetilde{\kappa}(\rho_2 y^2+\sigma_2 y)}{\widetilde{R}(y)}.
\]
Since the potential cannot depend on $k$, the parameter $c$ must satisfy the following condition
\begin{equation}\label{zero1}
2c-c^2=\frac{k^2}{\widetilde{\kappa}}
\end{equation}
and the expression for the potential reduces to
\[
V_{II}(x)= \left(\frac{y^{''}}{2y^{'}}\right)^2-\left(\frac{y^{''}}{2y^{'}}\right)^{'}
-\frac{\widetilde{\kappa}(\rho_2 y^2+\sigma_2 y)}{\widetilde{R}(y)}.
\]
The first two terms on the r.h.s. of the above expression can be computed with the same procedure we adopted 
for the case when $Y^{'}$ and $Z^{'}$ are linearly independent. Hence, we end up with a new potential   
\begin{equation}\label{potgen1}
V_{II}(x)=\frac{\widetilde{\kappa} y^2(1-y)^2}{\widetilde{R}(y)}\left[\frac{1}{y^2(1-y)^2}+
\frac{2\widetilde{\rho}}{\widetilde{R}(y)}+\frac{(1-2y)(2\widetilde{\rho}y+\widetilde{\sigma})}{y(1-y)\widetilde{R}(y)}
-\frac{5(2\widetilde{\rho}y+\widetilde{\sigma})^2}{4\widetilde{R}^2(y)}\right] 
-\widetilde{\kappa}\frac{\rho_2 y^2+\sigma_2 y}{\widetilde{R}(y)}.
\end{equation}
The parameters $a$, $b$, and $c$ can be expressed in terms of the parameters of the potential and $k^2$ as follows
\[
a=\frac{c+\Omega}{2}+\frac{1}{2}\sqrt{1-\rho_2-\sigma_2-(1+\widetilde{\rho}+\widetilde{\sigma})\frac{k^2}{\widetilde{k}}},\quad
b=a-\Omega,\quad c=1+\sqrt{1-\frac{k^2}{\widetilde{\kappa}}},\quad
\Omega=\sqrt{1-\rho_2-\widetilde{\rho}\frac{k^2}{\widetilde{\kappa}}}.
\]
It is interesting to observe that the potentials $V_I$ and $V_{II}$ will coincide if and only if
\begin{equation}\label{condoz1}
\widetilde{\rho}=1=\rho_2,\quad\widetilde{\sigma}=\rho,\quad\widetilde{\kappa}=\kappa,\quad
\sigma_2=\rho+\rho_1,\quad\sigma=1,\quad\sigma_1=-1.  
\end{equation}
Moreover, taking into account that in the case of the modified P\"{o}schl-Teller potential the coordinate transformation $y$ 
is given by $y(x)=\cosh^2{(\alpha x)}$ with $\alpha>0$ \cite{Fl}, it is not difficult to verify with the help of 
(\ref{eq_y1}) that this potential does not belong to the class $V_{II}$. A further class of potentials, here denoted as 
$V_{III}$ can be obtained by considering the case when $X^{'}$ and $Z^{'}$ are linearly independent. Then, there exist 
constants $\widehat{\rho}$ and $\widehat{\sigma}$ such that $Y^{'}+\widehat{\rho} X^{'}+\widehat{\sigma} Z^{'}=0$. 
If we use this condition together with (\ref{tilde}), we find that $\dot{A}-\widehat{\rho}\dot{B}=0$ and 
$\dot{C}-\widehat{\sigma}\dot{B}=0$. After integration of these equations with respect to 
$k$ we get $A=\widehat{\rho} B+\rho_3$ and $C=\widehat{\sigma} B+\sigma_3$ with $\rho_3,\sigma_3\in\mathbb{R}$. Also in this  
case $U$ can be expressed as 
\begin{equation}\label{puntoesclamativo2}
U(x)=T(x)+B\left[\widehat{\rho}X(x)+Y(x)+\widehat{\sigma} Z(x)\right]+\rho_3 X(x)+\sigma_3 Z(x).
\end{equation}
If we integrate (\ref{duepunti}) with respect to $x$ and make use of (\ref{XYZ}), we discover that the coordinate 
transformation $y$ must satisfy the following first order nonlinear autonomous differential equation, namely
\begin{equation}\label{eq_y2}
\frac{(y^{'})^2 \widehat{R}(y)}{4y^2 (1-y)^2}=\widehat{\kappa}, 
\quad\widehat{\kappa}\in\mathbb{R}\backslash\{0\},\quad \widehat{R}(y)=\widehat{\rho}y^2+y+\widehat{\sigma}.
\end{equation}
Since $U=k^2-V_{III}$ and applying (\ref{puntoesclamativo}), we find that
\[
V_{II}(x)=k^2+\left(\frac{y^{''}}{2y^{'}}\right)^2-\left(\frac{y^{''}}{2y^{'}}\right)^{'}
-B\left[\widehat{\rho}X(x)+Y(x)+\widehat{\sigma}Z(x)\right]-\rho_3 X(x)-\sigma_3 Z(x).
\]
With the help of (\ref{XYZ}) and (\ref{eq_y2}) it can be checked that
\[
B\left[\widehat{\rho}X(x)+Y(x)+\widehat{\sigma}Z(x)\right]+\rho_3 X(x)+\sigma_3 Z(x)=
2\widehat{\kappa}\left[c(a+b-1)-2ab\right]+\frac{\widehat{\kappa}(\rho_3 y^2+\sigma_3)}{\widehat{R}(y)}.
\]
If we choose the parameters $a$, $b$ and $c$ so that 
\begin{equation}\label{zero2}
c(a+b-1)-2ab=\frac{k^2}{2\widehat{\kappa}},
\end{equation}
we ensure that the potential $V_{III}$ will not depend on the energy of the particle. Hence, we obtain
\[
V_{III}(x)= \left(\frac{y^{''}}{2y^{'}}\right)^2-\left(\frac{y^{''}}{2y^{'}}\right)^{'}
-\frac{\widehat{\kappa}(\rho_3 y^2+\sigma_3)}{\widehat{R}(y)}.
\]
Computing the first two terms on the r.h.s. of the above expression as we did in the previous cases, we end up with the 
following potential   
\begin{equation}\label{potgen2}
V_{III}(x)=\frac{\widehat{\kappa} y^2(1-y)^2}{\widehat{R}(y)}\left[\frac{1}{y^2(1-y)^2}+
\frac{2\widehat{\rho}}{\widehat{R}(y)}+\frac{(1-2y)(2\widehat{\rho}y+1)}{y(1-y)\widehat{R}(y)}
-\frac{5(2\widehat{\rho}y+1)^2}{4\widehat{R}^2(y)}\right] 
-\widehat{\kappa}\frac{\rho_3 y^2+\sigma_3}{\widehat{R}(y)}.
\end{equation}
The parameters $a$, $b$, and $c$ can be expressed in terms of the parameters of the potential and $k^2$ as follows
\[
a=\frac{c+\widetilde{\Omega}}{2}+\frac{1}{2}\sqrt{1-\rho_3-\sigma_3-(1+\widehat{\rho}+\widehat{\sigma})\frac{k^2}{\widehat{k}}},\quad
b=a-\widetilde{\Omega},\quad c=1+\sqrt{1-\sigma_3-\widehat{\sigma}\frac{k^2}{\widetilde{\kappa}}},\quad
\widetilde{\Omega}=\sqrt{1-\rho_3-\widehat{\rho}\frac{k^2}{\widehat{\kappa}}}.
\]
Already comparing the last two terms in (\ref{potgen1}) and (\ref{potgen2}) we see that in general the potentials 
$V_{II}$ and $V_{III}$ do not in general coincide. They will coincide if and only if
\begin{equation}\label{condo2}
\widehat{\kappa}=\widetilde{\kappa},\quad
\widehat{\rho}=\widetilde{\rho},\quad
\widetilde{\sigma}=1=\widehat{\sigma},\quad
\sigma_2=0=\sigma_3,\quad
\rho_2=\rho_3.  
\end{equation}
Moreover, the potentials $V_I$ and $V_{III}$ will coincide whenever 
\begin{equation}\label{condo3}
\widehat{\kappa}=\kappa,\quad
\widehat{\rho}=\rho=\rho_3=1,\quad
\rho_1=-1,\quad
\widehat{\sigma}=\sigma,\quad
\sigma+\sigma_1=\sigma_3. 
\end{equation}
Finally, by means of (\ref{condoz1}), (\ref{condo2}) and (\ref{condo3}) it can be easily verified that there is only one potential living in all 
three classes. This happens when
\[
\kappa=\widetilde{\kappa}=\widehat{\kappa},\quad
\rho=\widetilde{\rho}=\widehat{\rho}=\rho_2=\rho_3=1,\quad
\rho_1=\sigma_1=-1,\quad
\sigma=\widetilde{\sigma}=\widehat{\sigma}=1,\quad
\sigma_2=\sigma_3=0. 
\]
Trivially, Iwata's potential becomes an even function whenever the coordinate transformation $y$ is even. Is it possible for $V_I$ to be 
symmetric even though the coordinate transformation is not? The next result shows that if $y$ is nonsymmetric and a certain set of 
functions is linearly independent, the potential must be zero, i.e. the only symmetric potential for such a choice of $y$ is the 
trivial potential. 
\begin{theorem}\label{teorema_1}
Let $D\subseteq\mathbb{R}$ be a common domain of definition for $y$ and $v$ with $v(x)=y(-x)$ and suppose that $y(x)\neq v(x)$ for 
at least one $x\in D\subseteq\mathbb{R}$. If the functions in the set  $\mathcal{U}=\bigcup_{i=1}^3\mathcal{U}_i$ with $\mathcal{U}_1=\{y^n-v^n~|~n=1,\cdots,6\}$, $\mathcal{U}_2=\{v^6 y^n-y^6 v^n~|~n=1,\cdots,5\}$, 
and $\mathcal{U}_3=\{y^m v^n-v^m y^n~|~m,n=1,\cdots,5;~m>n\}$ are linearly independent, then the potential $V_I$ 
as given in (\ref{potgen}) vanishes under the assumption that it is symmetric, i.e. $V_I(x)=V_I(-x)$.
\end{theorem}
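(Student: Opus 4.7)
The plan is to clear denominators in (\ref{potgen}) to write $V_I(x)=P(y)/[4R^3(y)]$, where
\[
P(y)=4\kappa R^2+8\kappa y^2(1-y)^2 R+4\kappa y(1-y)(1-2y)(2y+\rho)R-5\kappa y^2(1-y)^2(2y+\rho)^2-4\kappa R^2\bigl[y^2+(\rho+\rho_1)y+\sigma+\sigma_1\bigr]
\]
is a polynomial in $y$ of degree at most six. Setting $v(x):=y(-x)$, the assumption $V_I(x)=V_I(-x)$ becomes the functional identity $P(y)R^3(v)=P(v)R^3(y)$. Regarded as a two-variable polynomial, the left-minus-right side is antisymmetric under $y\leftrightarrow v$, hence lies in the span of the $\binom{7}{2}=21$ monomial combinations $y^m v^n-v^m y^n$ with $0\le n<m\le 6$. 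These split bijectively (up to an overall sign in the $\mathcal{U}_2$ case) into $\mathcal{U}_1$ (pairs with $n=0$; six elements), $\mathcal{U}_2$ (pairs with $m=6$, $n\ge 1$; five elements), and $\mathcal{U}_3$ (pairs with $1\le n<m\le 5$; ten elements), for a total of $21$.

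The linear-independence hypothesis then forces every coefficient in this antisymmetric expansion to vanish, so $P(y)R^3(v)\equiv P(v)R^3(y)$ as an identity in two formal variables. Fixing any $v_0$ with $R(v_0)\ne 0$ yields $P(y)/R^3(y)\equiv K$ for a constant $K$, i.e.\ $P(y)=K\,R^3(y)$. Since $R^3(y)$ has leading term $y^6$ with coefficient $1$, comparing the coefficients of $y^6$ shows that $K$ equals the $y^6$-coefficient of $P(y)$. A direct expansion of the five summands defining $P$ gives $y^6$-contributions $0$, $8\kappa$, $16\kappa$, $-20\kappa$, $-4\kappa$, which sum to zero. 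Hence $K=0$, so $P\equiv 0$, and $V_I\equiv 0$.

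The only non-routine element of the argument is the numerical cancellation $8+16-20-4=0$ in the leading coefficient of $P$: this is precisely what upgrades the intermediate conclusion \emph{``$V_I$ is constant''} to the stronger \emph{``$V_I$ vanishes identically''}, and without it the theorem would fail. The remaining steps---identifying $\mathcal{U}_1\cup\mathcal{U}_2\cup\mathcal{U}_3$ with the full antisymmetric monomial basis of bidegree at most $(6,6)$, and passing from a pointwise identity in $x$ to a polynomial identity in two variables via linear independence---are essentially bookkeeping once one observes that the bound $\deg P\le 6$ is forced by the structure of (\ref{potgen}).
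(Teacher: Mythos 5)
Your proof is correct, and its core mechanism is the same as the paper's: clear denominators, observe that the symmetry condition becomes an antisymmetric polynomial identity in $y$ and $v$, note that $\mathcal{U}_1\cup\mathcal{U}_2\cup\mathcal{U}_3$ is (up to sign) exactly the full set of $\binom{7}{2}=21$ antisymmetric combinations $y^mv^n-v^my^n$ with $0\le n<m\le 6$, and invoke linear independence to kill every coefficient. Where you diverge is in the endgame. The paper writes $V_I=-p(y)/(4R^3(y))$ with $p$ listed explicitly as a degree-$5$ polynomial, and then reads off $a_0=0$ from the coefficient of $y^6-v^6$ and $a_1=\cdots=a_5=0$ from the $\mathcal{U}_2$ coefficients (using $b_6=1$), so $p\equiv 0$ directly. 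You instead conclude from the vanishing of all antisymmetric coefficients that $P(y)R^3(v)\equiv P(v)R^3(y)$ formally, hence $P=KR^3$, and then show $K=0$ by the leading-coefficient computation $8\kappa+16\kappa-20\kappa-4\kappa=0$. Both arguments rest on the same hidden fact --- that the $y^6$ term of the cleared numerator cancels, i.e.\ $\deg P\le 5$ --- but the paper leaves this buried in its (machine-computed) list of coefficients $a_0,\dots,a_5$, whereas you isolate it as the single non-routine ingredient and verify it by hand; this makes transparent that without the cancellation one could only conclude that $V_I$ is constant, not that it vanishes. Your organization is therefore a worthwhile clarification of the published proof rather than a departure from it.
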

\begin{proof}
If we rewrite the potential as
\[
V_I(x)=-\frac{p(y)}{4q(y)},\quad p(y)=\sum_{n=0}^{5} a_n y^n,\quad q(y)=R^3(y)=\sum_{n=0}^{6} b_n y^n
\]
with
\begin{eqnarray*}
a_0&=&4\sigma^2(\sigma+\sigma_1-1),\quad a_1=4\sigma(\rho_1\sigma+3\rho\sigma+2\sigma_1\rho-3\rho),\\
a_2&=&12\rho\sigma+12\sigma^2+8\rho\rho_1\sigma-3\rho^2+4\sigma_1\rho^2+12\rho^2\sigma-24\sigma+8\sigma_1\sigma,\\
a_3&=&16\rho\sigma+40\sigma+8\sigma_1\rho+4\rho_1\rho^2+8\rho_1\sigma+4\rho^3+2\rho^2-8\rho,\\
a_4&=&9\rho^2+4\sigma_1-12\sigma+8\rho_1\rho+12\rho,\quad a_5=4\rho_1,\\
b_0&=&\sigma^3,\quad b_1=3\rho\sigma^2,\quad b_2=3\rho^2\sigma+3\sigma^2,\quad b_3=\rho^3+6\rho\sigma,\quad
b_4=3\sigma+3\rho^2,\\
b_5&=&3\rho,\quad b_6=1
\end{eqnarray*}
and set $v=y(-x)$, then $V_I$ will be symmetric, i.e. $V_I(x)=V_I(-x)$ if and only if
\[
p(y)q(v)-p(v)q(y)=0. 
\]
This relation can be equivalently written as
\begin{equation}\label{preludio}
\sum_{n=0}^{5} (b_0 a_n-a_0 b_n) (y^n-v^n)-a_0(y^6-v^6)+\sum_{n=1}^{5} a_n(v^6 y^n-y^6 v^n)+
\sum_{m,n=1}^{5} a_m b_n \varphi_{mn}=0,\quad\varphi_{mn}=y^m v^n-v^m y^n,
\end{equation}
where we made use of the fact that $b_6=1$. Since the last term in the above equation vanishes for $n=m$ and 
$\varphi_{mn}=-\varphi_{nm}$, we finally obtain the following more economic version of (\ref{preludio})
\begin{equation}\label{indep}
\sum_{n=0}^{5} (b_0 a_n-a_0 b_n) (y^n-v^n)-a_0(y^6-v^6)+\sum_{n=1}^{5} a_n(v^6 y^n-y^6 v^n)+\sum_{\substack{
   m,n=1 \\
   m>n
  }}^5
  a_m b_n \varphi_{mn}=0.
\end{equation}
Let us introduce the sets of functions $\mathcal{U}_1=\{y^n-v^n~|~n=1,\cdots,6\}$, $\mathcal{U}_2=\{v^6 y^n-y^6 v^n~|~n=1,\cdots,5\}$, 
and $\mathcal{U}_3=\{y^m v^n-v^m y^n~|~m,n=1,\cdots,5;~m>n\}$. If the functions in the set $\mathcal{U}=\bigcup_{i=1}^3\mathcal{U}_i$ 
are linearly independent, we can conclude that (\ref{indep}) will hold true for every $x$ whenever all coefficients in (\ref{indep}) 
vanish, i.e.
\begin{equation}\label{fine}
a_0=0,\quad b_0 a_n-a_0 b_n=0~\forall~n=1,\cdots,5,\quad a_1=\cdots=a_5=0. 
\end{equation}
Note that the last condition implies that $a_m b_n=0$ for any $m,n=1,\cdots,5$. Since $a_0=a_1=\cdots=a_5=0$, then $p(y)=0$ and thus
$V_I$ must vanish.~~$\square$
\end{proof}
\begin{remark}
It is obvious that (\ref{eq_y}) has a stationary solution $y_1=0$. Hence, among all possible nonsymmetric coordinate transformations those such 
that $y(-x)=-y(x)$ must be ruled out since they would intersect the solution $y_1$ and this would violate the Existence and Uniqueness 
Theorem for first order differential equations. 
\end{remark}
The system (\ref{fine}) has been solved with Maple $15$ and as a result we found that the only solution is given by 
$\rho=\sigma=\rho_1=\sigma_1=0$. For this set of values of the parameters equation (\ref{eq_y}) admits the following explicit 
solutions
\[
y_{\pm}(x)=1+C_{\pm}e^{\pm 2\sqrt{\kappa}x}. 
\]
Let $Y$ denote the set of all nonsymmetric solutions of (\ref{eq_y}) such that the set of functions 
$\mathcal{U}$ introduced in Theorem~\ref{teorema_1} is linearly independent and $S$ be the set of those elements in $Y$ such that the 
potential is symmetric. The set $S$ cannot be empty because if we take for instance $\kappa=1/4$, $\rho=\sigma=0$, $\rho_1=-\sigma_1$, 
and $y(x)=1+e^x$, then we obtain the following symmetric potential, namely
\[
V_I(x)=\frac{\sigma_1 e^x}{(1+e^x)^2}. 
\]
This shows that in general the potential $V_I$ can be symmetric even though the coordinate transformation is not.

\section{Natanzon's potential}\label{nata}
In the same year as Iwata, Natanzon derived the most general potential such that the ODSE can be transformed into the hypergeometric 
equation. The starting point in his derivation is the following relation linking the potential $V$ entering in the ODSE with the 
Bose invariant $I$ \cite{Bose} associated to the hypergeometric equation and the Schwarzian derivative $\{y,x\}$ of the 
coordinate transformation \cite{Natanzon}, namely
\begin{equation}\label{1n}
(y^{'})^2 I(y)+\frac{1}{2}\{y,x\}=k^2-V(x), 
\end{equation}
with
\begin{equation}\label{2n}
I(y)=\frac{(1-\widetilde{\lambda}_0^2)(1-y)+(1-\widetilde{\lambda}_1^2)y+(\mu^2-1)y(1-y)}{4y^2(1-y)^2},\quad
\{y,x\}=\frac{y^{''}}{y^{'}}\left(\frac{y^{'''}}{y^{''}}-\frac{3}{2}\frac{y^{''}}{y^{'}}\right).
\end{equation}
Here, the parameters $\mu$, $\widetilde{\lambda}_0$ and $\widetilde{\lambda}_1$ are related to the parameters $\alpha$, $\beta$, and $\gamma$ of the 
hypergeometric equation as follows
\begin{equation}\label{mn}
\widetilde{\lambda}_0=\gamma-1,\quad \widetilde{\lambda}_1=\alpha+\beta-\gamma,\quad\mu=\beta-\alpha. 
\end{equation}
Under the assumption that the coordinate transformation $y$ does not depend on the energy of the particle  and that the parameters 
$\mu$, $\widetilde{\lambda}_0$ and $\widetilde{\lambda}_1$ are linear in $k^2$, i.e.
\begin{equation}\label{4n}
1-\mu^2=ak^2-f,\quad 1-\widetilde{\lambda}^2_p=c_p k^2-h_p\quad p=0,1 
\end{equation}
Natanzon derived the following result \cite{Natanzon}.
\begin{theorem}
The most general potential such that the ODSE can be transformed into the 
hypergeometric equation is given by
\begin{equation}\label{nat_pot} 
V(x)=\frac{y^2(1-y)^2}{H^2(y)}\left[a+\frac{a+(c_1-c_0)(2y-1)}{y(y-1)}-\frac{5}{4}\frac{\Delta}{H(y)}\right]+
\frac{fy(y-1)+h_0(1-y)+h_1 y+1}{H(y)}
\end{equation}
with $\Delta=(a-c_0-c_1)^2-4c_1 c_0$ and $H(y)=ay^2+(c_1-c_0-a)y+c_0$. Moreover, the coordinate transformation $y=y(x)$ must satisfy the 
differential equation
\[
\frac{(y^{'})^2 H(y)}{4y^2(1-y)^2}=1.
\]
\end{theorem}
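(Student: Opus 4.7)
The plan is to exploit the linearity ansatz (\ref{4n}) together with the hypothesis that $V$ and $y$ do not depend on the energy, in order to split (\ref{1n}) into an equation proportional to $k^2$, which dictates the ODE for $y$, and a $k$-independent remainder, which fixes $V$. Substituting (\ref{4n}) into the numerator of $I(y)$ in (\ref{2n}) and collecting powers of $k^2$ gives
\[
4y^2(1-y)^2\, I(y) = k^2\, H(y) + P(y),\qquad P(y)=f y(1-y) - h_0(1-y) - h_1 y,
\]
where $H(y)=ay^2+(c_1-c_0-a)y+c_0$ emerges naturally as the $k^2$-coefficient. Reading (\ref{1n}) as a polynomial identity in $k^2$, the two $k^2$-coefficients must match, which forces $(y')^2 H(y)/[4y^2(1-y)^2]=1$, the claimed ODE. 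Using this relation to eliminate $(y')^2$ in the remainder of (\ref{1n}) then leaves
\[
V(x) = -\frac{1}{2}\{y,x\} + \frac{f y(y-1) + h_0(1-y) + h_1 y}{H(y)}.
\]

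Next I would evaluate the Schwarzian. With $g(y)=(y')^2=4y^2(1-y)^2/H(y)$, the identity $\frac{1}{2}\{y,x\}=\frac{1}{4}g''(y)-\frac{3}{16g(y)}(g'(y))^2$ reduces the computation to exactly the manipulation performed for Iwata's potential in Section~\ref{iwa}, with $R\to H$ and $\kappa\to 1$. Organised by powers of $1/H$, the outcome is
\[
-\frac{1}{2}\{y,x\} = \frac{1}{H(y)} + \frac{y^2(1-y)^2}{H^2(y)}\left[H''(y) + \frac{(1-2y)H'(y)}{y(1-y)} - \frac{5(H'(y))^2}{4H(y)}\right].
\]

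The final simplification rests on two algebraic identities. First, $H''(y)=2a$ and the expansion
\[
\frac{(2y-1)H'(y)}{y(y-1)} = 4a + \frac{a+(c_1-c_0)(2y-1)}{y(y-1)}
\]
combine the first two entries of the bracket into $6a+[a+(c_1-c_0)(2y-1)]/[y(y-1)]$. Second, the discriminant identity $(H'(y))^2 = \Delta + 4aH(y)$, with $\Delta=(a-c_0-c_1)^2-4c_0c_1$, turns the last entry of the bracket into $-5a-5\Delta/[4H(y)]$. The surplus $6a-5a=a$ remains, producing the inner bracket of (\ref{nat_pot}). Finally, the isolated $1/H(y)$ term from the Schwarzian merges with the algebraic piece above to upgrade the numerator $f y(y-1)+h_0(1-y)+h_1 y$ to $f y(y-1)+h_0(1-y)+h_1 y+1$, completing (\ref{nat_pot}).

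The only genuine insight required is the discriminant identity $(H')^2=\Delta+4aH$; without it the formula would still involve a cumbersome quadratic in $y$ instead of the compact constant $\Delta$. Everything else is organised bookkeeping that runs in full parallel with the Iwata derivation already detailed in Section~\ref{iwa}, which is what I would cite rather than repeat.
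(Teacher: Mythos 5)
Your proposal is correct and follows essentially the same route as the paper: the same splitting of the Bose invariant into a $k^2$-linear part $H(y)$ and a remainder, the same energy-independence argument forcing $(y')^2H/[4y^2(1-y)^2]=1$, and the same Schwarzian evaluation via $g=(y')^2=4y^2(1-y)^2/H$. Your only addition is to spell out the discriminant identity $(H')^2=\Delta+4aH$ and the bookkeeping $2a+4a-5a=a$, which the paper compresses into the phrase ``some trivial algebra''; this is a useful clarification but not a different argument.
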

\begin{proof} 		 
Let $y=y(x)$ be the coordinate transformation leading from the hypergeometric equation to the ODSE. Such a transformation is 
determined by the condition (\ref{1n}). By means of (\ref{1n}) and (\ref{2n}) we obtain 
\begin{equation}\label{5n}
V(x)=k^2-\frac{(y^{'})^2\{k^2[ay^2+(c_1-c_0-a)y+c_0]-[fy^2+(h_1-h_0-f)y+h_0]\}}{4y^2(1-y)^2}-\frac{1}{2}\{y,x\}.
\end{equation}
Assuming that (\ref{4n}) holds and that $y$ does not depend on the energy of the particle we end up with the following expression for the 
potential  
\[
V(x)=k^2-\frac{(y^{'})^2}{4y^2(1-y)^2}\{k^2\left[ay^2+(c_1-c_0-a)y+c_0\right]-
\left[fy^2+(h_1-h_0-f)y+h_0\right]\}-\frac{1}{2}\{y,x\}.
\]
Let $H(y)= ay^2+(c_1-c_0-a)y+c_0$. Then the above expression can be rewritten as
\[
V(x)=k^2-k^2\frac{(y^{'})^2H(y)}{4y^2(1-y)^2}+\frac{(y^{'})^2\left[fy^2+(h_1-h_0-f)y+h_0\right]}{4y^2(1-y)^2}
-\frac{1}{2}\{y,x\}.
\]
The requirement that the potential does not depend on the energy of the particle implies that the coordinate transformation must 
satisfy the differential equation 
\[
\frac{(y^{'})^2H(y)}{4y^2(1-y)^2}=1.
\]
Hence, we obtain the following expression for the potential
\[
V(x)=-\frac{1}{2}\{y,x\}+\frac{fy^2+(h_1-h_0-f)y+h_0}{H(y)}.
\]
Using (\ref{2n}) the above relation can be rewritten as
\begin{equation}\label{U1}
V(x)=\frac{3}{4}\left(\frac{y^{''}}{y^{'}}\right)^2-\frac{y^{'''}}{2y^{'}}+\frac{fy^2+(h_1-h_0-f)y+h_0}{H(y)}.
\end{equation}
Let us set $\widetilde{g}(y)=4y^2(1-y)^2/H(y)$. Then, the ODE for $y$ becomes $(y^{'})^2=\widetilde{g}(y)$ and differentiating it we 
can rewrite the terms entering in the Schwarzian derivative as
\[
\frac{y^{'''}}{y^{'}}=\frac{1}{2}\frac{d^2\widetilde{g}}{dy^2}
\]
and thus
\[
\frac{3}{4}\left(\frac{y^{''}}{y^{'}}\right)^2-\frac{y^{'''}}{2y^{'}}=\frac{3}{16\widetilde{g}}
\left(\frac{d\widetilde{g}}{dy}\right)^2-\frac{1}{4}\frac{d^2\widetilde{g}}{dy^2}.
\]
Hence, (\ref{U1}) can be rewritten as follows
\begin{equation}\label{U2}
V(x)= \frac{3}{16\widetilde{g}}
\left(\frac{d\widetilde{g}}{dy}\right)^2-\frac{1}{4}\frac{d^2\widetilde{g}}{dy^2}+\frac{fy^2+(h_1-h_0-f)y+h_0}{H(y)}.
\end{equation}
Taking into account that
\begin{eqnarray*}
\frac{d\widetilde{g}}{dy}&=&\frac{8y(1-y)(1-2y)}{H(y)}-\frac{4y^2(1-y)^2}{H^2(y)}G(y),\quad
G(y)=2ay+c_1-c_0-a,\\
\left(\frac{d\widetilde{g}}{dy}\right)^2&=&\frac{64 y^2(1-y)^2(1-2y)^2}{H^2(y)}-
\frac{64y^3(1-y)^3(1-2y)}{H^3(y)}G(y)+\frac{16y^4(1-y)^4}{H^4(y)}G^2(y),\\
\frac{3}{16\widetilde{g}}\left(\frac{d\widetilde{g}}{dy}\right)^2&=&\frac{3(1-2y)^2}{H(y)}-
\frac{3y(1-y)(1-2y)}{H^2(y)}G(y)+\frac{3y^2(1-y)^2}{H^3(y)}G^2(y),\\
\frac{d^2\widetilde{g}}{dy^2}&=&\frac{8(6y^2-6y+1)}{H(y)}-\frac{16y(1-y)(1-2y)}{H^2(y)}G(y)
-\frac{8ay^2(1-y)^2}{H^2(y)}+\frac{8y^2(1-y)^2}{H^3(y)}G^2(y).
\end{eqnarray*}
we obtain
\[
\frac{3}{16\widetilde{g}}
\left(\frac{d\widetilde{g}}{dy}\right)^2-\frac{1}{4}\frac{d^2\widetilde{g}}{dy^2}=
\frac{1}{H(y)}+\frac{y(1-y)(1-2y)}{H^2(y)}G(y)+\frac{2a y^2(1-y)^2}{H^2(y)}
-\frac{5y^2(1-y)^2}{4H^3(y)}G^2(y).
\]
Finally, substitution of the above expression into (\ref{U2}) gives
\begin{equation}\label{corr_Natanzon}
V(x)=\frac{y^2(1-y)^2}{H^2(y)}\left[2a+\frac{H(y)}{y^2(1-y)^2}+\frac{(1-2y)G(y)}{y(1-y)}
-\frac{5}{4}\frac{G^2(y)}{H(y)}\right]+\frac{fy^2+(h_1-h_0-f)y+h_0}{H(y)}.
\end{equation}
By rewriting  $G^2(y)$ in terms of $\Delta=(a-c_0-c_1)^2-4c_1 c_0$ some trivial algebra leads to (\ref{nat_pot}).~~$\square$
\end{proof}
The next result shows the Natanzon class of potentials contains all classed of Iwata's potentials. This indicates that the method 
employed by Iwata is less general than that used by Natanzon.
\begin{theorem}
Iwata's potentials (\ref{potgen}), (\ref{potgen1}), and (\ref{potgen2}) are a subclass of the Natanzon class (\ref{nat_pot}).
\end{theorem}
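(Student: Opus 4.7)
The proof rests on a structural observation. Both Iwata's and Natanzon's constructions produce a potential of the form
\[
V(x) = -\tfrac{1}{2}\{y,x\} + F(y),
\]
where $F$ is a rational function and $y$ satisfies a first order autonomous ODE of the shape
\[
\frac{(y')^{2}\, P(y)}{4y^{2}(1-y)^{2}} = \lambda, \qquad \lambda \in \mathbb{R}\setminus\{0\},
\]
with $P$ a (possibly non-monic) quadratic. Crucially, the ``derivative part'' $-\tfrac{1}{2}\{y,x\}$ is determined entirely by this ODE via the identities $y'' = \tfrac{1}{2}\,dg/dy$ and $y'''/y' = \tfrac{1}{2}\,d^{2}g/dy^{2}$ with $g=(y')^{2}$, both used explicitly in the derivations of (\ref{potgen}) and (\ref{corr_Natanzon}). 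Hence, once the two ODEs for $y$ are matched, the Schwarzian pieces of the two potentials agree automatically; only the rational tails have to be compared by hand.

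The first step is to treat $V_{I}$. Dividing (\ref{eq_y}) by $\kappa$ recasts it as Natanzon's ODE with $H(y) = R(y)/\kappa$, which forces
\[
a = \frac{1}{\kappa}, \qquad c_{0} = \frac{\sigma}{\kappa}, \qquad c_{1} = \frac{1+\rho+\sigma}{\kappa}.
\]
Matching the Iwata tail $-\kappa[y^{2}+(\rho+\rho_{1})y+\sigma+\sigma_{1}]/R(y)$ against the Natanzon tail $[fy^{2}+(h_{1}-h_{0}-f)y+h_{0}]/H(y)$ coming from (\ref{corr_Natanzon}) then yields
\[
f = -1, \qquad h_{0} = -(\sigma+\sigma_{1}), \qquad h_{1} = -(2+\rho+\rho_{1}+\sigma+\sigma_{1}),
\]
which embeds $V_{I}$ into the Natanzon class.

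The cases $V_{II}$ and $V_{III}$ are handled in exactly the same way, with $(P,\lambda)$ taken to be $(\widetilde{R},\widetilde{\kappa})$ for $V_{II}$ and $(\widehat{R},\widehat{\kappa})$ for $V_{III}$. In each case a short calculation identifies $(a,c_{0},c_{1})$ in terms of the tilde/hat coefficients of $P$, and matching the quadratic numerators in the tails of (\ref{potgen1}) and (\ref{potgen2}) against $[fy^{2}+(h_{1}-h_{0}-f)y+h_{0}]/H$ pins down $(f,h_{0},h_{1})$.

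The main obstacle I expect is purely bookkeeping: the bracketed ``derivative'' expressions in (\ref{potgen})--(\ref{potgen2}) must be compared with the bracket in (\ref{corr_Natanzon}) after the rescaling $H=P/\lambda$, keeping careful track of how powers of $\lambda$ propagate through the ratios $1/H$, $G(y)/H$ and $G^{2}(y)/H$, and remembering that the rearranged form (\ref{nat_pot}) has absorbed one copy of $1/H$ from the bracket into the tail. Once these scalings are consistently handled, the inclusion of all three Iwata classes in the Natanzon class follows from the explicit parameter identifications above; as a sanity check one may recover the modified P\"{o}schl--Teller specialization $(\rho=-1,\sigma=0)$ on both sides of the identification.
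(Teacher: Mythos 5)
Your strategy is essentially the paper's: exhibit each Iwata potential as a Natanzon potential by matching parameters. You reach the identification by equating the ODEs for $y$ (so the Schwarzian pieces coincide automatically, which is a correct and worthwhile observation) and then comparing the rational tails directly, whereas the paper gets there by substituting the formulas for $a$, $b$, $c$ from Section~\ref{iwa} into Natanzon's ansatz (\ref{4n}); the two routes are equivalent, and yours is arguably more self-contained since it never leaves the level of the potentials themselves.

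There is, however, one concrete arithmetic error. From your own matching condition $h_1-h_0-f=-(\rho+\rho_1)$ together with $f=-1$ and $h_0=-(\sigma+\sigma_1)$ you should get
\[
h_1=h_0+f-(\rho+\rho_1)=-(1+\rho+\rho_1+\sigma+\sigma_1),
\]
not $-(2+\rho+\rho_1+\sigma+\sigma_1)$ as you wrote; the paper's (\ref{cc1}) confirms the coefficient $1$. With your value the Natanzon tail would not reproduce the linear term of Iwata's tail, so the claimed embedding of $V_I$ would fail as stated. Correct that one constant and the argument goes through; the $V_{II}$ and $V_{III}$ cases you sketch are handled correctly by the same matching.
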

\begin{proof}
In Section~\ref{iwa} we derived formulae expressing the parameters $a$, $b$, and $c$ in terms of the parameters entering in Iwata's 
potentials (\ref{potgen}), (\ref{potgen1}), and (\ref{potgen2}). Substitution of these formulae into (\ref{4n}) allows to express the 
parameters of Natanzon's potential in terms of those for the Iwata potentials. Hence, (\ref{nat_pot}) reproduces 
the potential $V_I$ in (\ref{potgen}) for
\begin{equation}\label{cc1}
c_0=\frac{\sigma}{\kappa},\quad h_0=-(\sigma+\sigma_1),\quad
c_1=\frac{1+\rho+\sigma}{\kappa},\quad h_1=-(1+\rho+\rho_1+\sigma+\sigma_1),\quad a=\frac{1}{\kappa},\quad f=-1, 
\end{equation}
the potential $V_{II}$ in (\ref{potgen1}) if
\begin{equation}\label{cc2}
c_0=\frac{1}{\widetilde{\kappa}},\quad h_0=0,\quad
c_1=\frac{1+\widetilde{\rho}+\widetilde{\sigma}}{\widetilde{\kappa}},\quad h_1=-(\rho_2+\sigma_2),\quad 
a=\frac{\widetilde{\rho}}{\widetilde{\kappa}},\quad f=-\rho_2, 
\end{equation}
and the potential $V_{III}$ in (\ref{potgen2}) whenever
\begin{equation}\label{cc3}
c_0=\frac{\widehat{\sigma}}{\widehat{\kappa}},\quad h_0=-\sigma_3,\quad
c_1=\frac{1+\widehat{\rho}+\widehat{\sigma}}{\widehat{\kappa}},\quad h_1=-(\rho_3+\sigma_3),\quad 
a=\frac{\widehat{\rho}}{\widehat{\kappa}},\quad f=-\rho_3. 
\end{equation}
At this point it is not difficult to construct a potential in the Natanzon class which does not belong to any Iwata class. For 
example, we can start by choosing choose $f=0$, and $h_0=1$. This ensures that the potential cannot belong to the first or second 
Iwata class. Since we have $h_1=-\rho_3-\sigma_3=f+h_0$ in (\ref{cc3}), we can choose $h_1=0$, thus ensuring that this potential cannot be 
reproduced by any of the (\ref{potgen}), (\ref{potgen1}), and (\ref{potgen2}). Hence, Iwata's classes are contained in Natanzon's 
class.~~$\square$
\end{proof}.

\section{Generalized Natanzon potentials of the Heun class}\label{heunclass}
We shortly review some aspects of Milson's method \cite{Milson} playing a fundamental role in the derivation of the most general 
potentials such that the ODSE can be reduced to a Heun equation or one of its confluent cases. Let us consider the second order 
linear equation
\begin{equation}\label{1}
a(y)\frac{d^2 v}{dy^2}+b(y)\frac{dv}{dy}+c(y)v(y)=0. 
\end{equation}
By means of the Liouville transformation
\[
v(y)={\rm{exp}}\left(-\int_y\frac{b(t)}{2a(t)}~dt\right) f(y) 
\]
we find that the standard form of (\ref{1}) is
\begin{equation}\label{2}
\frac{d^2 f}{dy^2}+I(y)f(y)=0,\quad I(y)=\frac{1}{4a^2(y)}\left[4a(y)c(y)-2a(y)\frac{db}{dy}+2b(y)\frac{da}{dy}-b^2(y)\right], 
\end{equation}
where $I$ is the Bose invariant of (\ref{1})\cite{Bose,Milson}. If we make the coordinate transformation $y=y(x)$ and set 
$\widetilde{f}(x)=f(y(x))$, it can be seen that (\ref{2}) becomes
\begin{equation}\label{3}
 \frac{1}{(y^{'})^2}\frac{d^2\widetilde{f}}{dx^2}-\frac{y^{''}}{(y^{'})^2}\frac{d\widetilde{f}}{dx}+I(y)\widetilde{f}(x)=0,
\end{equation}
where the prime denotes differentiation with respect to $x$. The standard form of (\ref{3}) can be attained by another Liouville 
transformation, namely
\[
\widetilde{f}(x)= {\rm{exp}}\left(-\int_x\frac{y^{''}}{2y^{'}}~ds\right)u(x)=\sqrt{y^{'}}u(x) 
\]
and we obtain
\begin{equation}\label{4}
\frac{d^2 u}{dx^2}+J(x)u(x)=0,\quad
J(x)=(y^{'})^2 I(y)+\frac{1}{2}\{y,x\},
\end{equation}
where the Schwarzian derivative is given by
\[
\{y,x\}=\left(\frac{y^{''}}{y^{'}}\right)^{'}-\frac{1}{2}\left(\frac{y^{''}}{y^{'}}\right)^2
=\frac{y^{''}}{y^{'}}\left(\frac{y^{'''}}{y^{''}}-\frac{3}{2}\frac{y^{''}}{y^{'}}\right). 
\]
The solution of (\ref{4}) will be expressed in terms of the solutions of (\ref{1}) by the following relation
\[
u(x)=(y^{'})^{-1/2} {\rm{exp}}\left(\int_y\frac{b(t)}{2a(t)}~dt\right)v(y(x)).
\]
A comparison of the above expression with (\ref{treI}) show that the function $s$ introduced by Iwata will be given by the following 
formula
\[
s(x)= \sqrt{y^{'}}{\rm{exp}} \left(-\int_y\frac{b(t)}{2a(t)}~dt\right).
\]
Equation (\ref{4}) will reduce to an ODSE whenever $J(x)=k^2-V(x)$. Hence, the potential $V$ is entirely specified by the Bose 
invariant and the Schwarzian derivative of the coordinate transformation $y$. According to \cite{Milson} the potential $V$ of the 
ODSE will not depend on $k^2$ if the Bose invariant admits a decomposition of the form $I(y)=I_1(y)k^2+I_0(y)$ and the coordinate 
transformation $y$ is a solution of the autonomous differential equation
\begin{equation}\label{ODE_y}
 (y^{'})^2=\frac{1}{I_1(y)}.
\end{equation}
If this is the case, then the most general potential such that the ODSE can be transformed into (\ref{1}) is
\begin{equation}\label{Milson_pot}
V(x)=-\frac{I_0(y)}{I_1(y)}+\frac{4I_1(y)\ddot{I}_1(y)-5(\dot{I}_1)^2}{16I_1^3(y)},
\end{equation}
where a dot represents differentiation with respect to $y$. In what follows we extend the work of \cite{Milson} by deriving expressions for 
the potential $V$ in the case that (\ref{1}) is a Heun equation, a confluent Heun equation, a double confluent equation, a biconfluent 
Heun equation or a triconfluent Heun equation. We underline that Lemma~$3$ in \cite{Milson} cannot be applied for the Heun equation and 
its confluent cases because it would require that in the Bose invariant
\[
I(y)=\frac{T(y,k^2)}{A^2(y,k^2)} 
\]
the polynomial $A$ has has at most degree two, whereas the corresponding $A$ for the Heun equation and its confluent cases has degree 
three. The next result shows that the most general potential associated to the Heun equation is controlled by ten parameters.
\begin{theorem}\label{potenzialeHeun}
Let $f_0,\cdots,f_4,g_0,\cdots,g_4$ be real parameters. The most general potential such that the ODSE can be transformed into the 
Heun equation (\ref{Heun}) is given by
\begin{equation}\label{heun_pot} 
V_H(x)=\frac{y^2(y-1)^2(y-a)^2}{R^2(y)}\left[\ddot{R}(y)+\frac{G(y)\dot{R}(y)-2R(y)\dot{G}(y)}{y(y-1)(y-a)}+
\frac{R(y)G^2(y)}{y^2(y-1)^2(y-a)^2}-\frac{5}{4}\frac{(\dot{R}(y))^2}{R(y)}\right]-\frac{S(y)}{R(y)},
\end{equation}
where $G(y)=3y^2-2(a+1)y+a$ and
\begin{eqnarray*}
R(y)&=&g_0(y-1)^2(y-a)^2+g_1 y^2(y-a)^2+g_2 y^2(y-1)^2+g_3y^2(y-1)(y-a)+g_4y(y-1)(y-a),\\
S(y)&=&f_0(y-1)^2(y-a)^2+f_1 y^2(y-a)^2+f_2 y^2(y-1)^2+f_3y^2(y-1)(y-a)+f_4y(y-1)(y-a).
\end{eqnarray*}
Moreover, the coordinate transformation $y=y(x)$ satisfies the differential equation
\[
(y^{'})^2 =\frac{4y^2(y-1)^2(y-a)^2}{R(y)}.
\]
\end{theorem}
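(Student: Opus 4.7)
The plan is to apply Milson's framework, already distilled into (\ref{ODE_y})--(\ref{Milson_pot}) earlier in this section: compute the Bose invariant $I(y)$ of the Heun equation, decompose it as $I(y)=I_1(y)k^2+I_0(y)$ by letting the Heun parameters carry linear $k^2$-dependence, then read off the coordinate ODE and the potential directly from (\ref{ODE_y}) and (\ref{Milson_pot}). As noted just before the statement, Milson's Lemma~3 is not directly applicable because the denominator $y(y-1)(y-a)$ has degree three rather than two, but (\ref{Milson_pot}) itself was derived without that hypothesis, so the main work is to pin down the admissible form of $I_1,I_0$ and then evaluate the right-hand side of (\ref{Milson_pot}) on this form.

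Applying (\ref{2}) to the Heun equation with $a(y)=1$ and $b(y)=\gamma/y+\delta/(y-1)+\epsilon/(y-a)$ produces, after placing everything over the common denominator $[y(y-1)(y-a)]^2$, a Bose invariant of the form $I(y)=N(y)/[y(y-1)(y-a)]^2$ with $N$ a polynomial of degree at most four. A convenient basis for such numerators is
\[
(y-1)^2(y-a)^2,\quad y^2(y-a)^2,\quad y^2(y-1)^2,\quad y^2(y-1)(y-a),\quad y(y-1)(y-a),
\]
whose linear independence is immediate upon successive evaluation at $y=0,1,a$ followed by dividing $y(y-1)(y-a)$ out of the last two contributions. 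Writing $I_1=R/[4y^2(y-1)^2(y-a)^2]$ and $I_0=S/[4y^2(y-1)^2(y-a)^2]$ with $R,S$ expanded in this basis produces exactly the ten coefficients $g_0,\ldots,g_4,f_0,\ldots,f_4$ of the statement; the counting works out because the combinations $1-(1-\gamma)^2$, $1-(1-\delta)^2$, $1-(1-\epsilon)^2$, $\alpha\beta$ and $q$ (together with the cross products $\gamma\delta,\gamma\epsilon,\delta\epsilon$ that enter $b^2$ and supply the $y^2(y-1)(y-a)$ and $y(y-1)(y-a)$ pieces of $R,S$) offer ten independent linear-in-$k^2$ parameters. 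Equation (\ref{ODE_y}) then immediately yields $(y')^2=1/I_1(y)=4y^2(y-1)^2(y-a)^2/R(y)$, which is the ODE in the statement.

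It remains to identify the right-hand side of (\ref{heun_pot}) with that of (\ref{Milson_pot}). The first term of (\ref{Milson_pot}) contributes $-I_0/I_1=-S/R$, matching the last piece of (\ref{heun_pot}). For the bracketed piece I would set $Q(y)=y(y-1)(y-a)$ and $P(y)=4Q^2(y)$, so that $I_1=R/P$, together with the elementary identities $\dot Q=G(y)$, $\dot P=8QG$ and $\ddot P=8(G^2+Q\dot G)$, where $G(y)=3y^2-2(a+1)y+a$ is precisely the $G$ appearing in (\ref{heun_pot}). A direct expansion of $[4I_1\ddot I_1-5(\dot I_1)^2]/(16I_1^3)$ in terms of $R,\dot R,\ddot R,P,\dot P,\ddot P$ yields the four-term expression
\[
\frac{P(4R\ddot R-5\dot R^2)}{16R^3}+\frac{\dot R\,\dot P}{8R^2}-\frac{\ddot P}{4R}+\frac{3\dot P^2}{16RP},
\]
into which $P=4Q^2$, $\dot P=8QG$ and $\ddot P=8(G^2+Q\dot G)$ are substituted. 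After the two $G^2/R$ contributions (from $-\ddot P/(4R)$ and from $3\dot P^2/(16RP)$) combine, the remainder collapses to $(Q^2/R^2)\bigl[\ddot R+(G\dot R-2R\dot G)/Q+RG^2/Q^2-5\dot R^2/(4R)\bigr]$, which together with $-S/R$ is exactly (\ref{heun_pot}).

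The main obstacle is really the exhaustiveness of the parametrization: one must check that the chosen five-polynomial basis covers every possible $k^2$-linear piece of a Heun Bose invariant, without any hidden constraint induced by the Heun relation $\epsilon=\alpha+\beta+1-\gamma-\delta$, so that $(g_0,\ldots,g_4,f_0,\ldots,f_4)$ genuinely range independently over $\mathbb{R}^{10}$. The subsequent algebra is mechanical once $\dot P$ and $\ddot P$ are expressed through $Q$ and $G$; the cleanest way to organise it is to factor $Q^2/R^2$ out of the $\ddot R$ and $\dot R^2$ pieces and then match the remaining $QG\dot R/R^2$, $-2Q\dot G/R$, $G^2/R$ contributions against the inner factors $(G\dot R-2R\dot G)/Q$ and $RG^2/Q^2$ in (\ref{heun_pot}).
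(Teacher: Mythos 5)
Your proposal is correct and follows essentially the same route as the paper: the same five-polynomial basis for the numerator of the Bose invariant, the same decomposition $I=I_1k^2+I_0$ with each basis coefficient set to $g_ik^2+f_i$, and the same substitution into (\ref{Milson_pot}); your reorganization of the final algebra via $P=4Q^2$, $\dot P=8QG$, $\ddot P=8(G^2+Q\dot G)$ reproduces (\ref{heun_pot}) exactly. The one point you flag but leave open---that the parametrization is exhaustive, i.e.\ that each $\lambda_i$ is individually linear in $k^2$ with no hidden constraint---is settled in the paper by inverting a $3\times3$ linear system in $\lambda_1,\lambda_2,\lambda_3$ whose determinant is $(1-a)^3\neq0$.
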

\begin{proof}
By means of the relation $\epsilon=\alpha+\beta+1-\gamma-\delta$ we can cast the Bose invariant of the Heun equation into the form
\begin{equation}\label{BoseHeun}
I(y)=\frac{\lambda_0(y-1)^2(y-a)^2+\lambda_1 y^2(y-a)^2+\lambda_2 y^2(y-1)^2+\lambda_3y^2(y-1)(y-a)+\lambda_4y(y-1)(y-a)}
{4y^2(y-1)^2(y-a)^2}
\end{equation}
with
\[
\lambda_0=1-(1-\gamma)^2,\quad
\lambda_1=1-(1-\delta)^2,\quad
\lambda_2=1-(\alpha+\beta-\gamma-\delta)^2, 
\]
\[
\lambda_3=4\alpha\beta-2\gamma\delta-2(\gamma+\delta)(\alpha+\beta+1-\gamma-\delta),\quad
\lambda_4=-4q+2a\gamma\delta+2\gamma(\alpha+\beta+1-\gamma-\delta). 
\]
Note that the Bose invariant is a rational function of the form
\[
I(y)=\frac{P(y)}{Q(y)}=\frac{c_0+c_1 y+c_2 y^2+c_3 y^3+c_4 y^4}{4y^2(y-1)^2(y-a)^2}
\]
and it admits the decomposition $I(y)=I_1(y)k^2+I_0(y)$ if and only if $P(y)=R(y)k^2+S(y)$ for some polynomials 
$R(y)=a_0+a_1y+a_2y^2+a_3y^3+a_4y^4$ and $S(y)=b_0+b_1y+b_2y^2+b_3y^3+b_4y^4$. This will be the case if
\begin{eqnarray}
c_0&=&a_0k^2+b_0=a^2\lambda_0,\label{1p}\\
c_1&=&a_1k^2+b_1=-2a(a+1)\lambda_0+a\lambda_4,\label{2p}\\
c_2&=&a_2k^2+b_2=(a^2+4a+1)\lambda_0+a^2\lambda_1+\lambda_2+a\lambda_3-(a+1)\lambda_4,\label{3p}\\
c_3&=&a_3k^2+b_3=-2(a+1)\lambda_0-2a\lambda_1-2\lambda_2-(a+1)\lambda_3+\lambda_4,\label{4p}\\
c_4&=&a_4k^2+b_4=\lambda_0+\lambda_1+\lambda_2+\lambda_3+\lambda_4\label{5p}. 
\end{eqnarray}
Note that (\ref{1p}) implies that $\lambda_0$ is linear in $k^2$ and from (\ref{2p}) we conclude that $\lambda_4$ is also linear 
in $k^2$. We show now that $\lambda_1$, $\lambda_2$ and $\lambda_3$ are linear in $k^2$. From (\ref{3p})-(\ref{5p}) we see that 
$a^2\lambda_1+\lambda_2+a\lambda_3$, $-2a\lambda_1-2\lambda_2-(a+1)\lambda_3$, and $\lambda_1+\lambda_2+\lambda_3$ are also linear 
in $k^2$. Let
\begin{eqnarray*}
\lambda_1+\lambda_2+\lambda_3&=&r_1k^2+s_1,\\
-2a\lambda_1-2\lambda_2-(a+1)\lambda_3&=&r_2k^2+s_2,\\
a^2\lambda_1+\lambda_2+a\lambda_3&=&r_3k^2+s_3.
\end{eqnarray*}
for some scalars $r_1,\cdots,r_3,s_1,\cdots,s_3$. Let us rewrite the above non homogeneous system in matrix form, i.e.
\[
A\Lambda=k^2 T+B,\quad 
A= \left( \begin{array}{ccc}
1 & 1 & 1 \\
-2a & -1 & -(a+1) \\
a^2 & 1 & a \end{array} \right),\quad
\Lambda=\left( \begin{array}{c}
\lambda_1  \\
\lambda_2 \\
\lambda_3  \end{array} \right),\quad 
T=\left( \begin{array}{c}
r_1  \\
r_2 \\
r_3  \end{array} \right),\quad
S=\left( \begin{array}{c}
s_1  \\
s_2 \\
s_3  \end{array} \right).
\]
Since the parameter $a$ in the Heun equation cannot assume the values $0$ and $1$, we conclude that ${\rm{det}}(A)=(1-a)^3\neq 0$. 
Hence, the matrix $A$ is invertible and therefore $\lambda_1,\lambda_2,\lambda_3$ are linear in $k^2$. By setting 
$\lambda_i=g_ik^2+f_i$ with $i=0,1,\cdots,4$ we can rewrite the Bose invariant (\ref{BoseHeun}) in the form $I(y)=I_1(y)k^2+I_0(y)$ 
with
\[
I_1(y)=\frac{R(y)}{4y^2(y-1)^2(y-a)^2},\quad  I_0(y)=\frac{S(y)}{4y^2(y-1)^2(y-a)^2},
\]
and
\begin{eqnarray*}
R(y)&=&g_0(y-1)^2(y-a)^2+g_1 y^2(y-a)^2+g_2 y^2(y-1)^2+g_3y^2(y-1)(y-a)+g_4y(y-1)(y-a),\\
S(y)&=&f_0(y-1)^2(y-a)^2+f_1 y^2(y-a)^2+f_2 y^2(y-1)^2+f_3y^2(y-1)(y-a)+f_4y(y-1)(y-a).
\end{eqnarray*}
Taking into account that
\[
\dot{I}_1(y)=\frac{\dot{R}(y)}{4y^2(y-1)^2(y-a)^2}-\frac{R(y)G(y)}{2y^3(y-1)^3(y-a)^3},\quad G(y)=3y^2-2(a+1)y+a,
\]
we find that
\begin{equation}\label{I11}
\frac{(\dot{I}_1)^2}{I_1}= \frac{(\dot{R}(y))^2}{4y^2(y-1)^2(y-a)^2 R(y)}+
\frac{R(y)G^2(y)}{y^4(y-1)^4(y-a)^4}-\frac{\dot{R}(y)G(y)}{y^3(y-1)^3(y-a)^3}.
\end{equation}
Moreover,
\begin{equation}\label{I22}
\ddot{I}_1(y)=\frac{\ddot{R}(y)}{4y^2(y-1)^2(y-a)^2}-\frac{\dot{R}(y)G(y)+R(y)[3y-(a+1)]}{y^3(y-1)^3(y-a)^3} 
+\frac{3}{2}\frac{R(y)G^2(y)}{y^4(y-1)^4(y-a)^4}.
\end{equation}
Finally, substituting (\ref{I11}) and (\ref{I22}) into (\ref{Milson_pot}) we obtain (\ref{heun_pot}).~~$\square$
\end{proof}
In the next result we show that the Natanzon class is a special case of the potential (\ref{heun_pot}).
\begin{corollary}
Natanzon's class of solvable potentials (\ref{nat_pot}) is a subclass of the family of potentials (\ref{heun_pot}). 
\end{corollary}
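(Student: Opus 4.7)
My plan is to invoke Milson's formula (\ref{Milson_pot}), which determines the potential entirely from the decomposition $I(y) = I_1(y) k^2 + I_0(y)$ of the Bose invariant. If I can specialize the ten parameters $g_0,\ldots,g_4,f_0,\ldots,f_4$ so that the Heun Bose invariant coincides with the Natanzon one, the potentials must coincide as well, since the functional form of (\ref{Milson_pot}) is the same in both settings (its derivation in Theorem~\ref{potenzialeHeun} does not actually require $A$ to have degree three).

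The Heun Bose invariant has the shape $I^{Heun}(y) = [R(y)k^2 + S(y)]/[4y^2(y-1)^2(y-a)^2]$, whereas the Natanzon one (read off from (\ref{5n})) has denominator $4y^2(1-y)^2$ and numerator involving the quadratic $H(y) = ay^2 + (c_1-c_0-a)y + c_0$. The two denominators differ by $(y-a)^2$, so the guiding requirement is the factorization $R(y) = (y-a)^2 H(y)$ together with $S(y) = (y-a)^2 T(y)$ for some quadratic $T$.

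I would execute the factorization of $R$ by inspection: among the five basis polynomials only $g_0(y-1)^2(y-a)^2$ and $g_1 y^2(y-a)^2$ automatically carry $(y-a)^2$, so the remaining combination $g_2 y^2(y-1)^2 + g_3 y^2(y-1)(y-a) + g_4 y(y-1)(y-a)$ must vanish to second order at $y=a$. Evaluation at $y=a$ gives $g_2 a^2(a-1)^2 = 0$, forcing $g_2 = 0$, and then extracting the next factor of $(y-a)$ forces $g_4 = -ag_3$. Expanding the resulting $R$ and matching to $(y-a)^2 H(y)$ pins down $g_0 = c_0$, $g_1 = c_1$, $g_3 = a-c_0-c_1$, $g_4 = a(c_0+c_1-a)$, with the Heun singularity parameter $a$ identified with the Natanzon constant of the same name. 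An identical divisibility argument on $S$ yields $f_2 = 0$ and $f_4 = -af_3$, and the three remaining free constants $f_0, f_1, f_3$ map linearly onto Natanzon's $h_0, h_1, f$, modulo absorption of the constant $+1$ visible in the bracket of (\ref{nat_pot}).

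Under this specialization the common factors $(y-a)^2$ cancel, giving $I_1^{Heun}(y) = H(y)/[4y^2(1-y)^2] = I_1^{Natanzon}(y)$ and an analogous identity for $I_0$. The ODE $(y')^2 = 1/I_1$ therefore collapses from the Heun form to Natanzon's, and Milson's formula (\ref{Milson_pot}) produces precisely the Natanzon potential. I expect the main obstacle to be the bookkeeping that matches $f_0, f_1, f_3$ to $f, h_0, h_1$ while accounting for the rearrangement between the bracket and the last term of (\ref{nat_pot}); at the level of Bose invariants, however, the identification is clean and unambiguous, and no further algebraic obstruction arises.
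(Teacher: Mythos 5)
Your argument is correct and it arrives at exactly the same parameter identifications as the paper ($g_2=f_2=0$, $g_4=-ag_3$, $f_4=-af_3$, $g_0=\widehat{c}_0$, $g_1=\widehat{c}_1$, $g_3=\widehat{a}-\widehat{c}_0-\widehat{c}_1$, and the analogous relations expressing $f_0,f_1,f_3$ through $\widehat{h}_0,\widehat{h}_1,\widehat{f}$), but by a different mechanism. The paper degenerates the Heun \emph{equation} itself: it sets $q=\alpha\beta a$ and $\epsilon=0$ so that (\ref{Heun}) collapses to the hypergeometric equation, computes the resulting $\lambda_i$ in (\ref{BoseHeun}), and translates them into Natanzon's parameters through (\ref{mn}) and (\ref{4n}). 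You instead work entirely at the level of the Bose invariant, demanding that $(y-a)^2$ divide both $R$ and $S$ so that the sextic denominator cancels down to Natanzon's quartic one; since both potentials are instances of the single formula (\ref{Milson_pot}) --- which, as you correctly observe, is insensitive to the degree of the denominator --- equality of the pair $(I_1,I_0)$ forces equality of the coordinate ODEs and of the potentials, bookkeeping of the $+1$ in (\ref{nat_pot}) notwithstanding. Your route is more economical (it never touches $\alpha,\beta,\gamma,\delta,q$), while the paper's makes explicit that the required $\lambda_i$ are actually realized by a genuine degeneration of a Heun equation. One point you should repair: do not identify the Heun singularity location $a$ with the leading coefficient of Natanzon's $H(y)=\widehat{a}y^2+(\widehat{c}_1-\widehat{c}_0-\widehat{a})y+\widehat{c}_0$; they play different roles, which is precisely why the paper renames the latter $\widehat{a}$. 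The matching only requires $g_0+g_1+g_3=\widehat{a}$, while the Heun $a$ remains a free point of $\mathbb{C}\backslash\{0,1\}$ that cancels out of the final potential; tying the two together would spuriously exclude the Natanzon potentials with $\widehat{a}\in\{0,1\}$.
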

\begin{proof}
The Bose invariant of the Heun equation will collapse into the Bose invariant of the hypergeometric equation whenever $q=\alpha\beta a$ 
and $\epsilon=\alpha+\beta+1-\gamma-\delta=0$. In this case the parameters $\lambda_0,\cdots,\lambda_4$ in (\ref{BoseHeun}) become 
\[
\lambda_0=1-(1-\gamma)^2,\quad
\lambda_1=1-(\alpha+\beta-\gamma)^2,\quad
\lambda_2=0,\quad
\lambda_3=4\alpha\beta-2\gamma(\alpha+\beta+1-\gamma),\quad
\lambda_4=-a\lambda_3. 
\]
Since $a$ denotes the position of a regular singular point in (\ref{Heun}), we rename the parameters entering in Natanzon's potential 
according to the rule $a\to\widehat{a}$, $c_i\to \widehat{c}_i$, $h_i\to\widehat{h}_i$, and $f\to\widehat{f}$, where $i=0,1$. Then, in the 
notation of the previous theorem and making use of (\ref{mn}) and (\ref{4n}) we find that $g_0k^2+f_0=\lambda_0
=\widetilde{\lambda}_0=\widehat{c}_0 k^2-\widehat{h}_0$. Hence, it must be $g_0=\widehat{c}_0$ and $f_0=-\widehat{h}_0$. Moreover, 
$g_1 k^2+f_1=\lambda_1=\widetilde{\lambda}_1=\widehat{c}_1k^2-\widehat{h}_1$ and therefore we have $g_1=\widehat{c}_1$ and 
$f_1=-\widehat{h}_1$. Furthermore, $\lambda_2=0$ implies that $g_2=f_2=0$. If we rewrite $\lambda_3$ as 
$\lambda_3=[1-(\alpha-\beta)^2]+[1-(\alpha+\beta-\gamma)^2]-[1-(\gamma-1)^2]$ and use (\ref{mn}) and (\ref{4n}) we find that
\[
g_3 k^2+f_3=(\widehat{a}-\widehat{c}_0-\widehat{c}_1)k^2+\widehat{h}_0+\widehat{h}_1-\widehat{f}. 
\]
Hence, $g_3=\widehat{a}-\widehat{c}_0-\widehat{c}_1$ and $f_3=\widehat{h}_0+\widehat{h}_1-\widehat{f}$. Finally, from the relation 
$\lambda_4=-a\lambda_3$ we obtain $g_4=-ag_3$ and $f_4=-af_3$. This completes the proof.~~$\square$
\end{proof}
We consider now the case of the confluent Heun equation \cite{Ronveaux}
\begin{equation}\label{CHE}
\frac{d^2 v}{dy^2}+\left(\frac{\gamma}{y}+\frac{\delta}{y-1}+4p\right)\frac{dv}{dy}+\frac{4p\alpha y-\sigma}{y(y-1)}v(y)=0,
\end{equation}
where $\gamma,\delta,p,\alpha,\sigma$ are arbitrary parameters.
\begin{theorem}
Let $\sigma_0,\cdots,\sigma_4,\tau_0,\cdots,\tau_4$ be real parameters. The most general potential such that the ODSE can be transformed into 
the confluent Heun equation (\ref{CHE}) is given by
\begin{equation}\label{confheun_pot} 
V_{CH}(x)=\frac{y^2(y-1)^2}{R^2(y)}\left[\ddot{R}(y)+\frac{(2y-1)\dot{R}(y)+4R(y)}{y(y-1)}+
\frac{(2y-1)^2R(y)}{y^2(y-1)^2}-\frac{5}{4}\frac{(\dot{R}(y))^2}{R(y)}\right]-\frac{S(y)}{R(y)},
\end{equation}
where
\begin{eqnarray*}
R(y)&=&\sigma_0 y^2+\sigma_1 (y-1)^2+\sigma_2 y^2(y-1)^2+\sigma_3y^2(y-1)+\sigma_4y(y-1),\\
S(y)&=&\tau_0 y^2+\tau_1 (y-1)^2+\tau_2 y^2(y-1)^2+\tau_3y^2(y-1)+\tau_4y(y-1).
\end{eqnarray*}
Moreover, the coordinate transformation $y=y(x)$ satisfies the differential equation
\[
(y^{'})^2 =\frac{4y^2(y-1)^2}{R(y)}.
\]
\end{theorem}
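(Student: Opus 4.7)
The plan is to mirror the approach used for Theorem~\ref{potenzialeHeun}, adapting the computations to the singularity structure of the confluent Heun equation. First, I would compute the Bose invariant of (\ref{CHE}) via $I(y)=c(y)-\tfrac{1}{2}b'(y)-\tfrac{1}{4}b^2(y)$ with $b(y)=\gamma/y+\delta/(y-1)+4p$ and $c(y)=(4p\alpha y-\sigma)/[y(y-1)]$. After reducing all contributions to a common denominator $4y^2(y-1)^2$, the result takes the form $I(y)=P(y)/[4y^2(y-1)^2]$ with $P$ a polynomial in $y$. The highest power of $y$ comes exclusively from the $16p^2$ constant in $b^2$, so $\deg P\le 4$. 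This is the key structural fact that places us in Milson's framework.

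Next I would expand $P$ in the basis $\mathcal{B}=\{y^2,(y-1)^2,y^2(y-1)^2,y^2(y-1),y(y-1)\}$. Evaluating a generic linear combination at $y=0$ and $y=1$ peels off two coefficients, and the residual is $y(y-1)$ times a quadratic whose vanishing forces the remaining three coefficients to be zero; hence $\mathcal{B}$ is linearly independent and spans the five-dimensional space of polynomials of degree at most four. So $P(y)=\sum_{i=0}^{4}\mu_i(k^2)\,P_i(y)$ uniquely. According to the decomposition principle behind (\ref{Milson_pot}), the derived potential is $k$-independent precisely when $I(y)=I_1(y)k^2+I_0(y)$, that is, when each coefficient is linear in $k^2$: $\mu_i=\sigma_ik^2+\tau_i$. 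Writing out these five affine decompositions reproduces the polynomials $R(y)$ and $S(y)$ of the theorem, and the autonomous ODE (\ref{ODE_y}) for the coordinate transformation immediately yields $(y')^2=4y^2(y-1)^2/R(y)$.

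Finally I would substitute $I_1=R/(4N^2)$ and $I_0=S/(4N^2)$ with $N(y)=y(y-1)$ into Milson's formula. The first term contributes $-I_0/I_1=-S/R$. For the second, I would compute $\dot{I}_1$ and $\ddot{I}_1$ by the quotient rule exactly as in the Heun proof, but with the shorter chain $N'=2y-1$ and $N''=2$; the rational function $(4I_1\ddot{I}_1-5\dot{I}_1^{\,2})/(16I_1^3)$ then collapses, after factoring $N^2/R^2$, into an expression involving $\ddot R$, $N'\dot R/N$, $(\dot R)^2/R$, and the combination $R[(N')^2-2NN'']/N^2$. The identity $(2y-1)^2-4y(y-1)=1$ is the confluent analogue of the $G^2-2N\dot G$ cancellation used in the Heun case and is what makes the bracket condense into (\ref{confheun_pot}). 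The main obstacle is purely bookkeeping: organizing the derivative computation so that the answer appears in the form stated, with the correct grouping of the $\dot R$, $R$, and $(\dot R)^2/R$ contributions. Unlike the Heun proof, no invertibility argument for an auxiliary linear system is required, since the basis $\mathcal{B}$ already parametrises the degree-four numerator directly.
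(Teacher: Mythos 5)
Your proposal is correct and follows essentially the same route as the paper: write the Bose invariant of (\ref{CHE}) as a degree-four numerator over $4y^2(y-1)^2$, expand that numerator in the basis $\{y^2,(y-1)^2,y^2(y-1)^2,y^2(y-1),y(y-1)\}$, require each coefficient to be affine in $k^2$, and substitute $I_1=R/[4y^2(y-1)^2]$, $I_0=S/[4y^2(y-1)^2]$ into (\ref{Milson_pot}); the paper reaches the identical five-parameter $R$, $S$ only slightly less directly, by first writing the invariant with six redundant coefficients $\lambda_0,\dots,\lambda_5$ tied to $\gamma,\delta,p,\alpha,\sigma$ and then regrouping the linear-in-$k^2$ combinations. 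One point in your favour: carrying out the derivative bookkeeping exactly as you describe yields the middle term $[(2y-1)\dot{R}-4R]/[y(y-1)]$, so your identity $(2y-1)^2-4y(y-1)=1$ does condense the $R$-terms to $R/[y^2(y-1)^2]$, and the $+4R$ printed in (\ref{confheun_pot}) is a sign typo rather than a flaw in your argument.
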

\begin{proof}
The Bose invariant of the confluent Heun equation can be written as
\begin{equation}\label{BoseConfHeun}
I(y)=\frac{\lambda_0 y^2+\lambda_1 (y-1)^2+\lambda_2 y^2(y-1)^2+\lambda_3y^2(y-1)+\lambda_4y(y-1)+\lambda_5 y(y-1)^2}
{4y^2(y-1)^2}
\end{equation}
with
\[
\lambda_0=1-(1-\delta)^2,\quad
\lambda_1=1-(1-\gamma)^2,\quad
\lambda_2=-16p^2, 
\lambda_3=8p(2\alpha-\delta),\quad
\lambda_4=-2(2\sigma+\gamma\delta),\quad
\lambda_5=-8p\gamma. 
\]
On the other side (\ref{BoseConfHeun}) can also be expressed in terms of the $\lambda_i$'s as 
\[
I(y)=\frac{\lambda_2 y^4+(-2\lambda_2+\lambda_3+\lambda_5)y^3+(\lambda_0+\lambda_1+\lambda_2-\lambda_3+\lambda_4-2\lambda_5)
y^2+(-2\lambda_1-\lambda_4+\lambda_5)y+\lambda_1}
{4y^2(y-1)^2}.
\]
Moreover, $I$ is a rational function of the form
\[
I(y)=\frac{P(y)}{Q(y)}=\frac{c_0+c_1 y+c_2 y^2+c_3 y^3+c_4 y^4}{4y^2(y-1)^2}
\]
and it admits the decomposition $I(y)=I_1(y)k^2+I_0(y)$ if and only if $P(y)=R(y)k^2+S(y)$ for some polynomials 
$R(y)=a_0+a_1y+a_2y^2+a_3y^3+a_4y^4$ and $S(y)=b_0+b_1y+b_2y^2+b_3y^3+b_4y^4$. This will be the case if
\begin{eqnarray*}
c_0&=&a_0k^2+b_0=\lambda_1,\\
c_1&=&a_1k^2+b_1=-2\lambda_1-\lambda_4+\lambda_5,\\
c_2&=&a_2k^2+b_2=\lambda_0+\lambda_1+\lambda_2-\lambda_3+\lambda_4-2\lambda_5\\
c_3&=&a_3k^2+b_3=-2\lambda_2+\lambda_3+\lambda_5,\\
c_4&=&a_4k^2+b_4=\lambda_2. 
\end{eqnarray*}
From the above set of equations it is straightforward to see that $\lambda_0$, $\lambda_1$, $\lambda_2$, $\lambda_5-\lambda_4$, and 
$\lambda_3+\lambda_5$ must be linear in $k^2$. Let us rewrite the Bose invariant in terms of this quantities, that is
\[
 I(y)=\frac{\lambda_0 y^2+\lambda_1 (y-1)^2+\lambda_2 y^2(y-1)^2+(\lambda_3+\lambda_5)y^2(y-1)+(\lambda_4-
 \lambda_5)y(y-1)}{4y^2(y-1)^2}.
\]
By setting $\lambda_i=\sigma_ik^2+\tau_i$ with $i=0,1,2$ and $\lambda_5-\lambda_4=\sigma_3 k^2+\tau_3$, 
$\lambda_3+\lambda_5=\sigma_4 k^2+\tau_4$ we can rewrite (\ref{BoseConfHeun}) in the form $I(y)=I_1(y)k^2+I_0(y)$ with
\[
I_1(y)=\frac{R(y)}{4y^2(y-1)^2},\quad  I_0(y)=\frac{S(y)}{4y^2(y-1)^2},
\]
and
\begin{eqnarray*}
R(y)&=&\sigma_0 y^2+\sigma_1 (y-1)^2+\sigma_2 y^2(y-1)^2+\sigma_3y^2(y-1)+\sigma_4y(y-1),\\
S(y)&=&\tau_0 y^2+\tau_1 (y-1)^2+\tau_2 y^2(y-1)^2+\tau_3y^2(y-1)+\tau_4y(y-1).
\end{eqnarray*}
Taking into account that
\[
\dot{I}_1(y)=\frac{\dot{R}(y)}{4y^2(y-1)^2}-\frac{(2y-1)R(y)}{2y^3(y-1)^3},
\]
we find that
\begin{equation}\label{CI11}
\frac{(\dot{I}_1)^2}{I_1}=\frac{(\dot{R}(y))^2}{4y^2(y-1)^2 R(y)}+
\frac{(2y-1)^2R(y)}{y^4(y-1)^4}-\frac{(2y-1)\dot{R}(y)}{y^3(y-1)^3}.
\end{equation}
Moreover,
\begin{equation}\label{CI22}
\ddot{I}_1(y)=\frac{\ddot{R}(y)}{4y^2(y-1)^2}-\frac{(2y-1)\dot{R}(y)+R(y)}{y^3(y-1)^3} 
+\frac{3}{2}\frac{(2y-1)^2R(y)}{y^4(y-1)^4}.
\end{equation}
Finally, substituting (\ref{CI11}) and (\ref{CI22}) into (\ref{Milson_pot}) we obtain (\ref{confheun_pot}).~~$\square$
\end{proof}
We consider now the case of the biconfluent Heun equation \cite{Ronveaux}
\begin{equation}\label{BCHE}
y\frac{d^2 v}{dy^2}+\left(1+\alpha-\beta y-2y^2\right)\frac{dv}{dy}+\left[(\gamma-\alpha-2)y-\frac{\delta+(1+\alpha)\beta}{2}\right]v(y)=0,
\end{equation}
where $\alpha,\beta,\gamma,\delta$ are arbitrary parameters.
\begin{theorem}
Let $\beta_0,\cdots,\beta_3,\rho_0,\cdots,\rho_3$ be real parameters. The most general potential such that the ODSE can be 
transformed into the biconfluent Heun equation (\ref{BCHE}) is given by
\begin{equation}\label{bconfheun_pot} 
V_{BCH}(x)=\frac{y^2}{R^2(y)}\left[\ddot{R}(y)+\frac{\dot{R}(y)}{y}+
\frac{R(y)}{y^2}-\frac{5}{4}\frac{(\dot{R}(y))^2}{R(y)}\right]-\frac{S(y)}{R(y)},
\end{equation}
where
\[
R(y)=\beta_3 y^3+\beta_2 y^2+\beta_1 y+\beta_0,\quad S(y)=-4y^4+\rho_3 y^3+\rho_2 y^2+\rho_1 y+\rho_0.
\]
Moreover, the coordinate transformation $y=y(x)$ satisfies the differential equation
\[
(y^{'})^2 =\frac{4y^2}{R(y)}.
\]
\end{theorem}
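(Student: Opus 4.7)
\emph{Proof plan.} My approach will mirror the strategy used in Theorem~\ref{potenzialeHeun} and the confluent Heun case: first compute the Bose invariant of the biconfluent Heun equation~(\ref{BCHE}) explicitly, then decompose it as $I(y)=I_1(y)k^2+I_0(y)$, and finally substitute into Milson's formula~(\ref{Milson_pot}) to read off the potential, together with the ODE $(y')^2=1/I_1(y)$ from~(\ref{ODE_y}).

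Casting~(\ref{BCHE}) into the form $a(y)v''+b(y)v'+c(y)v=0$ with $a(y)=y$, $b(y)=1+\alpha-\beta y-2y^2$, and $c(y)=(\gamma-\alpha-2)y-[\delta+(1+\alpha)\beta]/2$, I will apply the Bose formula $I=[4ac-2ab'+2ba'-b^2]/(4a^2)$ to obtain an expression of the form $I(y)=P(y)/(4y^2)$ with $P(y)$ a quartic polynomial in $y$. The crucial structural observation, which distinguishes this case from the Heun and confluent Heun cases and explains why the polynomial $R(y)$ in the statement is cubic rather than quartic, is that the leading coefficient of $P(y)$ equals $-4$: this comes from the $-2y^2$ term inside $b(y)$ (through $b^2$) and is manifestly independent of the four Heun parameters $\alpha,\beta,\gamma,\delta$. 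Consequently, in any decomposition $P(y)=R(y)k^2+S(y)$, the $y^4$-coefficient of $R$ must vanish and that of $S$ is pinned to $-4$; the remaining four coefficients of $P$ supply the eight free real parameters $\beta_0,\ldots,\beta_3,\rho_0,\ldots,\rho_3$.

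To ensure that such a linear-in-$k^2$ splitting is actually attainable, I will check that the Heun parameters $\alpha,\beta,\gamma,\delta$ can be chosen as functions of $k^2$ so that each coefficient of $P(y)$ becomes linear in $k^2$. The only non-obvious slot is the $y^2$-coefficient $4\gamma-\beta^2$: once $\beta$ is taken linear in $k^2$, $\beta^2$ is quadratic, so $\gamma$ must carry a compensating quadratic component, but this is harmless since $\gamma$ is unconstrained. Once the decomposition is in hand, the remainder is a bookkeeping exercise: with $I_1(y)=R(y)/(4y^2)$ and $I_0(y)=S(y)/(4y^2)$, the term $-I_0/I_1$ in~(\ref{Milson_pot}) yields $-S/R$ immediately, while the quotient-rule computation of $\dot{I}_1,\ddot{I}_1$, inserted into $[4I_1\ddot I_1-5(\dot I_1)^2]/(16I_1^3)$, produces after cancellation exactly the bracketed expression in~(\ref{bconfheun_pot}). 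The coordinate equation $(y')^2=4y^2/R(y)$ is then read off from~(\ref{ODE_y}). The main obstacle is purely the algebra of the derivative bookkeeping; the only conceptual point is the degree-reduction observation above, which trims the ansatz for $R$ from quartic (as in the Heun case) to cubic for the biconfluent case.
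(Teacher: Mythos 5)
Your proposal is correct and follows essentially the same route as the paper: compute the Bose invariant $I(y)=\bigl[-4y^4-4\beta y^3+(4\gamma-\beta^2)y^2-2\delta y+1-\alpha^2\bigr]/(4y^2)$, split it as $I_1k^2+I_0$ by making $\beta$, $4\gamma-\beta^2$, $\delta$, and $1-\alpha^2$ linear in $k^2$, and feed the result into Milson's formula. Your explicit remarks that the fixed $-4y^4$ term forces $R$ to be cubic (pushing $-4y^4$ into $S$) and that $\gamma$ can absorb the quadratic-in-$k^2$ part of $\beta^2$ are points the paper leaves implicit, but they do not change the argument.
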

\begin{proof}
The Bose invariant of the biconfluent Heun equation can be written as
\begin{equation}\label{BoseBConfHeun}
I(y)=\frac{-4y^4-4\beta y^3+(4\gamma-\beta^2)y^2-2\delta y+1-\alpha^2}{4y^2}
\end{equation}
and it can be decomposed as $I(y)=I_1(y)k^2+I_0(y)$ if
\[
\beta=-\frac{\beta_3}{4}k^2-\frac{\rho_3}{4},\quad 4\gamma-\beta^2=\beta_2k^2+\rho_2,\quad \delta=-\frac{\beta_1}{2}k^2-\frac{\rho_1}{2},
\quad 1-\alpha^2=\beta_0 k^2+\rho_0. 
\]
It is straightforward to verify that
\[
I_1(y)=\frac{R(y)}{4y^2},\quad  I_0(y)=\frac{S(y)}{4y^2}
\]
with
\[
R(y)=\beta_3 y^3+\beta_2 y^2+\beta_1 y+\beta_0,\quad S(y)=-4y^4+\rho_3 y^3+\rho_2 y^2+\rho_1 y+\rho_0.
\]
Considering that
\[
\dot{I}_1(y)=\frac{\dot{R}(y)}{4y^2}-\frac{R(y)}{2y^3},
\]
we find that
\begin{equation}\label{BCI11}
\frac{(\dot{I}_1)^2}{I_1}=\frac{(\dot{R}(y))^2}{4y^2 R(y)}+
\frac{R(y)}{y^4}-\frac{\dot{R}(y)}{y^3}.
\end{equation}
In addition, we have
\begin{equation}\label{BCI22}
\ddot{I}_1(y)=\frac{\ddot{R}(y)}{4y^2}-\frac{\dot{R}(y)}{y^3} 
+\frac{3}{2}\frac{R(y)}{y^4}.
\end{equation}
Finally, substituting (\ref{BCI11}) and (\ref{BCI22}) into (\ref{Milson_pot}) we obtain (\ref{bconfheun_pot}).~~$\square$
\end{proof}
We consider now the case of the double confluent Heun equation \cite{Ronveaux}
\begin{equation}\label{DCHE}
y^2\frac{d^2 v}{dy^2}+y\frac{dv}{dy}+\left(\frac{B_{-2}}{y^2}+\frac{B_{-1}}{y}+B_0+B_1y+B_2 y^2\right)v(y)=0,
\end{equation}
where $B_{-2},\cdots,B_2$ are arbitrary parameters.
\begin{theorem}
Let $\mu_0,\cdots,\mu_4,\nu_0,\cdots,\nu_4$ be real parameters. The most general potential such that the ODSE can be 
transformed into the double confluent Heun equation (\ref{DCHE}) is given by
\begin{equation}\label{dconfheun_pot} 
V_{DCH}(x)=\frac{y^4}{R^2(y)}\left[\ddot{R}(y)+\frac{2\dot{R}(y)}{y}-\frac{5}{4}\frac{(\dot{R}(y))^2}{R(y)}\right]-\frac{S(y)}{R(y)},
\end{equation}
where
\[
R(y)=\mu_4 y^4+\mu_3 y^3+\mu_2 y^2+\mu_1 y+\mu_0,\quad S(y)=\nu_4 y^4+\nu_3 y^3+\nu_2 y^2+\nu_1 y+\nu_0.
\]
Moreover, the coordinate transformation $y=y(x)$ satisfies the differential equation
\[
(y^{'})^2 =\frac{4y^4}{R(y)}.
\]
\end{theorem}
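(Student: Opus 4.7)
The plan is to follow exactly the template set by the proofs for the Heun, confluent Heun and biconfluent Heun equations: compute the Bose invariant of (\ref{DCHE}), extract the decomposition $I(y)=I_{1}(y)k^{2}+I_{0}(y)$, identify the polynomials $R(y)$ and $S(y)$, and then feed $I_{1}$ and $I_{0}$ into Milson's formula (\ref{Milson_pot}).

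First I would apply the general formula from (\ref{2}) with $a(y)=y^{2}$, $b(y)=y$, and $c(y)=B_{-2}y^{-2}+B_{-1}y^{-1}+B_{0}+B_{1}y+B_{2}y^{2}$. A short computation gives
\[
I(y)=\frac{4B_{-2}+4B_{-1}y+(4B_{0}+1)y^{2}+4B_{1}y^{3}+4B_{2}y^{4}}{4y^{4}},
\]
which is already a rational function whose denominator is exactly $4y^{4}$. This is the analogue of the first step in the earlier proofs, except that here all five $B$-parameters enter \emph{independently} into the five coefficients of the numerator, so the argument establishing that the relevant spectral parameters are linear in $k^{2}$ reduces to the trivial observation that each $B_{j}$ may be assumed of the form $(\mu_{j+2}k^{2}+\nu_{j+2})/4$ (with an obvious shift for the coefficient of $y^{2}$). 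No matrix inversion is needed, in contrast to the Heun case.

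Next I would simply set
\[
I_{1}(y)=\frac{R(y)}{4y^{4}},\qquad I_{0}(y)=\frac{S(y)}{4y^{4}},
\]
with $R$ and $S$ as defined in the statement. The differential equation $(y')^{2}=1/I_{1}(y)=4y^{4}/R(y)$ then follows at once from (\ref{ODE_y}). Differentiating,
\[
\dot{I}_{1}(y)=\frac{\dot{R}(y)}{4y^{4}}-\frac{R(y)}{y^{5}},\qquad
\ddot{I}_{1}(y)=\frac{\ddot{R}(y)}{4y^{4}}-\frac{2\dot{R}(y)}{y^{5}}+\frac{5R(y)}{y^{6}},
\]
and a direct substitution into Milson's formula (\ref{Milson_pot}) yields $-I_{0}/I_{1}=-S/R$ for the first term and, after multiplying by $1/(16I_{1}^{3})=4y^{12}/R^{3}$ and collecting, the cancellation of the $1/R$ and $\dot{R}/R^{2}$ contributions produces exactly the bracketed expression in (\ref{dconfheun_pot}).

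The only real obstacle is keeping the bookkeeping of the $y$-powers clean when combining $4I_{1}\ddot{I}_{1}-5(\dot{I}_{1})^{2}$; as in the biconfluent case, the algebra is slightly heavier than for Heun but involves no new ideas, because the denominator $4y^{4}$ is a pure monomial and the $G(y)$-type factor that complicated the Heun derivation collapses to the single term $-4y^{3}$. Once the cancellations are performed one reads off (\ref{dconfheun_pot}), completing the proof.
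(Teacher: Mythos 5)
Your proposal is correct and follows essentially the same route as the paper: compute the Bose invariant of (\ref{DCHE}) with denominator $4y^4$, note that the five $B$-parameters enter independently so the decomposition $I=I_1k^2+I_0$ amounts to taking each $B_j$ linear in $k^2$, and then substitute $\dot{I}_1$ and $\ddot{I}_1$ (your formulas match the paper's) into Milson's formula (\ref{Milson_pot}). The only difference is cosmetic: the paper does not spell out the final cancellation of the $R^2/y^{10}$ terms, which you verify explicitly.
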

\begin{proof}
The Bose invariant of the double confluent Heun equation is
\begin{equation}\label{BoseDConfHeun}
I(y)=\frac{4B_2 y^4+4B_1 y^3+(4B_0+1)y^2+4B_{-1}y+4B_{-2}}{4y^4}
\end{equation}
and it can be decomposed as $I(y)=I_1(y)k^2+I_0(y)$ if
\[
B_2=\frac{\mu_4}{4}k^2+\frac{\nu_4}{4},\quad B_1=\frac{\mu_3}{4}k^2+\frac{\nu_3}{4},\quad 
4B_0+1=\mu_2 k^2+\nu_2, \quad B_{-1}=\frac{\mu_0}{4} k^2+\frac{\nu_1}{4},\quad
B_{-2}=\frac{\mu_0}{4}k^2+\frac{\nu_0}{4}. 
\]
It is straightforward to verify that
\[
I_1(y)=\frac{R(y)}{4y^4},\quad  I_0(y)=\frac{S(y)}{4y^4}
\]
with
\[
R(y)=\mu_4 y^4+\mu_3 y^3+\mu_2 y^2+\mu_1 y+\mu_0,\quad S(y)=\nu_4 y^4+\nu_3 y^3+\nu_2 y^2+\nu_1 y+\nu_0.
\]
If we take into account that
\[
\dot{I}_1(y)=\frac{\dot{R}(y)}{4y^4}-\frac{R(y)}{y^5},
\]
we arrive at
\begin{equation}\label{DCI11}
\frac{(\dot{I}_1)^2}{I_1}=\frac{(\dot{R}(y))^2}{4y^4 R(y)}+
\frac{4R(y)}{y^6}-\frac{2\dot{R}(y)}{y^5}.
\end{equation}
We also have
\begin{equation}\label{DCI22}
\ddot{I}_1(y)=\frac{\ddot{R}(y)}{4y^4}-\frac{2\dot{R}(y)}{y^5}+\frac{5R(y)}{y^6}.
\end{equation}
Finally, substituting (\ref{DCI11}) and (\ref{DCI22}) into (\ref{Milson_pot}) we obtain (\ref{dconfheun_pot}).~~$\square$
\end{proof}
We consider now the case of the triconfluent Heun equation \cite{Ronveaux}
\begin{equation}\label{TCHE}
\frac{d^2 v}{dy^2}+\left(A_0+A_1 y+A_2 y^2-\frac{9}{4}y^4\right)v(y)=0,
\end{equation}
where $A_{0},A_1,A_2$ are arbitrary parameters.
\begin{theorem}
Let $\vartheta_i,\omega_i$ with $i=0,1,2$ be real parameters. The most general potential such that the ODSE can be 
transformed into the triconfluent Heun equation (\ref{TCHE}) is given by
\begin{equation}\label{tconfheun_pot} 
V_{TCH}(x)=\frac{1}{R^2(y)}\left[2\vartheta_2-\frac{5}{4}\frac{(2\vartheta_2 y+\vartheta_1)^2}{R(y)}\right]-\frac{S(y)}{R(y)},
\end{equation}
where
\[
R(y)=\vartheta_2 y^2+\vartheta_1 y+\vartheta_0,\quad S(y)=-9 y^4+\omega_2 y^2+\omega_1 y+\omega_0.
\]
Moreover, the coordinate transformation $y=y(x)$ satisfies the differential equation
\[
(y^{'})^2 =\frac{4}{R(y)}.
\]
\end{theorem}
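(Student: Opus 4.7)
The plan is to follow the same template used for the Heun, confluent Heun, biconfluent Heun, and doubly confluent Heun cases: compute the Bose invariant, impose the decomposition $I(y)=I_1(y)k^2+I_0(y)$ to identify the free parameters, read off the autonomous ODE $(y')^2=1/I_1(y)$, and finally substitute into Milson's formula \eqref{Milson_pot}.

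First I would compute the Bose invariant of \eqref{TCHE}. Because the triconfluent Heun equation is already in normal form, i.e.\ the coefficient of $v$'' is $1$ and there is no first-derivative term, formula \eqref{2} collapses to
\[
I(y)=A_0+A_1 y+A_2 y^2-\tfrac{9}{4}y^4.
\]
This is the simplest Bose invariant among the Heun family and, crucially, the coefficient $-9/4$ of $y^4$ is a universal constant that does not depend on any of the equation's parameters. Hence, in any decomposition $I=I_1k^2+I_0$, the $y^4$ term must be absorbed entirely into $I_0$, while the three parameters $A_0,A_1,A_2$ must each be linear in $k^2$. Setting
\[
A_j=\tfrac{\vartheta_j}{4}k^2+\tfrac{\omega_j}{4},\qquad j=0,1,2,
\]
yields $I_1(y)=R(y)/4$ with $R(y)=\vartheta_2 y^2+\vartheta_1 y+\vartheta_0$ and $I_0(y)=S(y)/4$ with $S(y)=-9y^4+\omega_2 y^2+\omega_1 y+\omega_0$, exactly as claimed. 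The autonomous equation $(y')^2=1/I_1(y)=4/R(y)$ then follows from \eqref{ODE_y}.

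Next I would substitute into Milson's formula \eqref{Milson_pot}. The derivatives of $I_1$ are extremely simple in this case:
\[
\dot{I}_1(y)=\tfrac{1}{4}(2\vartheta_2 y+\vartheta_1),\qquad \ddot{I}_1(y)=\tfrac{\vartheta_2}{2}.
\]
Plugging these in and using $16I_1^3=R^3/4$, the bracketed piece in \eqref{Milson_pot} becomes
\[
\frac{4I_1\ddot{I}_1-5(\dot{I}_1)^2}{16I_1^3}=\frac{2\vartheta_2}{R^2(y)}-\frac{5(2\vartheta_2 y+\vartheta_1)^2}{4R^3(y)}=\frac{1}{R^2(y)}\left[2\vartheta_2-\frac{5(2\vartheta_2 y+\vartheta_1)^2}{4R(y)}\right],
\]
and combining with $-I_0/I_1=-S(y)/R(y)$ gives \eqref{tconfheun_pot}.

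There is essentially no serious obstacle here: the triconfluent case is the most degenerate in the Heun family, the Bose invariant is a polynomial rather than a rational function, and the whole argument reduces to identifying the linear dependence on $k^2$ and carrying out a short algebraic simplification. The only point requiring a moment's care is justifying that the $y^4$ coefficient is forced into $I_0$ rather than $I_1$ (otherwise the coordinate ODE would depend on $k$), and that the six parameters $\vartheta_0,\vartheta_1,\vartheta_2,\omega_0,\omega_1,\omega_2$ are genuinely free, which follows because the five equation parameters $A_0,A_1,A_2$ (each split into a $k^2$-linear and a constant part) produce exactly these six degrees of freedom.
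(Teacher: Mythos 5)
Your proposal is correct and follows essentially the same route as the paper: identify the Bose invariant $I(y)=\bigl(-9y^4+4A_2y^2+4A_1y+4A_0\bigr)/4$, split each $A_j$ linearly in $k^2$ to get $I_1=R/4$ and $I_0=S/4$, and substitute into Milson's formula \eqref{Milson_pot}. The algebraic simplification of the bracketed term checks out, and your added remark on why the fixed $-9y^4$ term must be absorbed into $I_0$ is a sensible (if brief) clarification the paper leaves implicit.
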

\begin{proof}
The Bose invariant of the triconfluent Heun equation is trivially given by
\begin{equation}\label{BoseTConfHeun}
I(y)=\frac{-9y^4+4A_2 y^2+4A_1 y +4A_0}{4}.
\end{equation}
The decomposition $I(y)=I_1(y)k^2+I_0(y)$ can be achieved if we choose
\[
A_2=\frac{\vartheta_2}{4}k^2+\frac{\omega_2}{4},\quad A_1=\frac{\vartheta_1}{4}k^2+\frac{\omega_1}{4},\quad 
A_0=\frac{\vartheta_0}{4}k^2+\frac{\omega_0}{4}. 
\]
Hence, we find that
\[
I_1(y)=\frac{R(y)}{4},\quad  I_0(y)=\frac{S(y)}{4}
\]
with
\[
R(y)=\vartheta_2 y^2+\vartheta_1 y+\vartheta_0,\quad S(y)=-9 y^4+\omega_2 y^2+\omega_1 y+\omega_0.
\]
Substituting
\[
\frac{(\dot{I}_1)^2}{I_1}=\frac{(\dot{R}(y))^2}{4R(y)},\quad
\ddot{I}_1(y)=\frac{\ddot{R}(y)}{4} 
\]
into (\ref{Milson_pot}) we obtain (\ref{tconfheun_pot}).~~$\square$
\end{proof}
We conclude this section by deriving the most general potential such that the ODSE can be transformed into the generalized Heun 
equation (\ref{GHE}). Since this equation contains the Heun equation as a special case, the potential derived in the next result which 
depends upon $14$ real parameters will also generalize the class of potentials obtained in Theorem~\ref{potenzialeHeun}.
\begin{theorem}\label{potenzialeGHeun}
Let $\eta_1,\cdots,\eta_7,\xi_1,\cdots,\xi_7$ be real parameters. The most general potential such that the ODSE can be transformed 
into the generalized Heun equation (\ref{GHE}) is given by
\begin{equation}\label{gheun_pot} 
V_{GH}(x)=\frac{y^2(y-1)^2(y-a)^2}{R^2(y)}\left[\ddot{R}(y)+\frac{G(y)\dot{R}(y)-2R(y)\dot{G}(y)}{y(y-1)(y-a)}+
\frac{R(y)G^2(y)}{y^2(y-1)^2(y-a)^2}-\frac{5}{4}\frac{(\dot{R}(y))^2}{R(y)}\right]-\frac{S(y)}{R(y)},,
\end{equation}
where $G(y)=3y^2-2(a+1)y+a$ and
\[
R(y)=\eta_1(y-1)^2(y-a)^2+\eta_2 y^2(y-a)^2+\eta_3y^2(y-1)^2+\eta_4y^2(y-1)^2(y-a)^2
\]
\[
+\eta_5 y(y-1)^2(y-a)^2+\eta_6y^2(y-1)(y-a)+\eta_7 y(y-1)(y-a),
\]
\[
S(y)=\xi_1(y-1)^2(y-a)^2+\xi_2 y^2(y-a)^2+\xi_3y^2(y-1)^2+\xi_4y^2(y-1)^2(y-a)^2
\]
\[
+\xi_5 y(y-1)^2(y-a)^2+\xi_6y^2(y-1)(y-a)+\xi_7 y(y-1)(y-a),
\]
Moreover, the coordinate transformation $y=y(x)$ satisfies the differential equation
\[
(y^{'})^2 =\frac{4y^2(y-1)^2(y-a)^2}{R(y)}.
\]
\end{theorem}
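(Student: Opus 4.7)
The proof closely parallels that of Theorem \ref{potenzialeHeun}. The plan is to compute the Bose invariant of the generalized Heun equation \eqref{GHE}, express it with common denominator $4y^2(y-1)^2(y-a)^2$, impose the decomposition $I(y)=I_1(y)k^2+I_0(y)$ required by Milson's method so that the resulting potential is independent of $k$, and then substitute into \eqref{Milson_pot}. Because the denominator of $I(y)$ is identical in shape to that of the Heun case, the computations of $\dot{I}_1$, $\ddot{I}_1$ and $(\dot{I}_1)^2/I_1$ will parallel \eqref{I11}--\eqref{I22}, with only the polynomial $R(y)$ (and correspondingly $S(y)$) requiring modification.

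The first step is to write
\[
I(y)=c(y)-\tfrac{1}{2}\dot{b}(y)-\tfrac{1}{4}b^2(y),
\]
with $b$ and $c$ the coefficients of $dv/dy$ and $v$ in \eqref{GHE}, and to collect the numerator of $I(y)\cdot 4y^2(y-1)^2(y-a)^2$ as a polynomial of degree at most six. Compared with the Heun Bose invariant \eqref{BoseHeun}, two new structural features arise: the constant term $\widehat{\alpha}$ in $b$ produces a pure $\widehat{\alpha}^2\, y^2(y-1)^2(y-a)^2$ piece of degree six, while the cross products $\widehat{\alpha}(1-\mu_i)$, $i=0,1,2$, generate three degree-five pieces $y(y-1)^2(y-a)^2$, $y^2(y-1)(y-a)^2$ and $y^2(y-1)^2(y-a)$. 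The remaining parameter combinations $(1-\mu_i)^2$, $(1-\mu_i)(1-\mu_j)$ and $\beta_0,\beta_1,\beta_2$ reproduce the same building blocks already catalogued in the proof of Theorem \ref{potenzialeHeun}.

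Next, I would justify the seven-element basis appearing in $R(y)$. Using the conditions ``value at $y=0,1,a$'', ``coefficient of $y^6$'' and ``first derivative at $y=0,1,a$'' one identifies each $\eta_i$ in turn; the $3\times 3$ subsystem determining $\eta_5,\eta_6,\eta_7$ has a determinant proportional to $a^3(a-1)^3$, so the basis is complete whenever $a\notin\{0,1\}$. The two ``surplus'' degree-five polynomials $y^2(y-1)(y-a)^2$ and $y^2(y-1)^2(y-a)$ therefore admit explicit expansions in terms of $y(y-1)^2(y-a)^2$, $y^2(y-1)(y-a)$ and $y(y-1)(y-a)$. Imposing $I(y)=I_1(y)k^2+I_0(y)$ then yields a nonhomogeneous linear system of the form $A\Lambda=k^2 T+B$ with $\det A\neq 0$, analogous to the one appearing in the Heun proof, which forces each of the seven parameter combinations to be linear in $k^2$ and delivers the stated $R(y)$ and $S(y)$. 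The autonomous ODE $(y^{'})^2=4y^2(y-1)^2(y-a)^2/R(y)$ follows from \eqref{ODE_y}.

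The last step is to substitute $I_1=R(y)/[4y^2(y-1)^2(y-a)^2]$ and $I_0=S(y)/[4y^2(y-1)^2(y-a)^2]$ into \eqref{Milson_pot}. Since the denominator of $I_1$ coincides with that of the Heun case, the identities \eqref{I11} and \eqref{I22} apply verbatim with the new $R$, and a direct substitution reproduces \eqref{gheun_pot}. The main obstacle is the basis bookkeeping in the second step: one must explicitly decompose the two surplus degree-five cross terms in the chosen seven-element basis and verify the invertibility of the resulting $7\times 7$ linear system, but this reduces to a finite-dimensional linear algebra computation and presents no real difficulty.
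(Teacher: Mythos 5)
Your proposal is correct and follows essentially the same route as the paper: expand the Bose invariant of the GHE over the denominator $4y^2(y-1)^2(y-a)^2$, observe that the numerator lives in the seven-dimensional space of polynomials of degree at most six spanned by the seven listed building blocks (a basis precisely because $a\notin\{0,1\}$), deduce that the corresponding seven coefficient combinations must be linear in $k^2$, and substitute $I_1=R/[4y^2(y-1)^2(y-a)^2]$ into \eqref{Milson_pot} reusing the identities \eqref{I11}--\eqref{I22}. The paper carries out your ``basis bookkeeping'' step by Gauss--Jordan elimination on the coefficient equations to extract the specific $\lambda_i$ combinations, which is the same finite linear-algebra computation you describe.
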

\begin{proof}
Let us write the Bose invariant of the generalized Heun equation into the following form
\[
4y^2(y-1)^2(y-a)^2I(y)=\lambda_0 y(y-1)(y-a)+\lambda_1 y^2(y-1)(y-a)+\lambda_2 y(y-1)(y-a)^2+\lambda_3y(y-1)^2(y-a)+
\lambda_4y(y-1)^2(y-a)^2
\]
\[
+\lambda_5 y^2(y-1)(y-a)^2+\lambda_6 y^2(y-1)^2(y-a)+\lambda_7(y-1)^2(y-a)^2+\lambda_8 y^2(y-a)^2+\lambda_9 y^2(y-1)^2 
\]
\begin{equation}\label{BoseGHeun}
\lambda_{10}y^2(y-1)^2(y-a)^2+\lambda_{11}y^3(y-1)(y-a) 
\end{equation}
with
\[
\lambda_0=4\beta_0,\quad
\lambda_1=4\beta_1-2\alpha_1\alpha_2,\quad
\lambda_2=-2\alpha_0\alpha_1,\quad
\lambda_3=-2\alpha_0\alpha_2,\quad
\lambda_4=-2\alpha_0\alpha_3,\quad
\lambda_5=-2\alpha_1\alpha_3,
\]
\[
\lambda_6=-2\alpha_2\alpha_3,\quad
\lambda_7=2\alpha_0-\alpha_0^2,\quad
\lambda_8=2\alpha_1-\alpha_1^2,\quad
\lambda_9=2\alpha_2-\alpha_2^2,\quad
\lambda_{10}=-\alpha_3^2,\quad
\lambda_{11}=4\beta_2. 
\]
The Bose invariant is a rational function of the form
\[
I(y)=\frac{P(y)}{Q(y)}=\frac{c_0+c_1 y+c_2 y^2+c_3 y^3+c_4 y^4+c_5 y^5+c_6 y^6}{4y^2(y-1)^2(y-a)^2}
\]
and it admits the decomposition $I(y)=I_1(y)k^2+I_0(y)$ if and only if $P(y)=R(y)k^2+S(y)$ for some polynomials 
$R(y)=\eta_0+\eta_1y+\eta_2y^2+\eta_3y^3+\eta_4y^4+\eta_5 y^5+\eta_6 y^6$ and 
$S(y)=\xi_0+\xi_1y+\xi_2y^2+\xi_3y^3+\xi_4y^4+\xi_5 y^5+\xi_6 y^6$. This will be the case if
\[
c_0=a_0k^2+b_0=a^2\lambda_7,\quad
c_1=a_1k^2+b_1=a\lambda_0-a^2\lambda_2-a\lambda_3+a^2\lambda_4-(2a+a^2)\lambda_7,
\]
\[
c_2=a_2k^2+b_2=-(1+a)\lambda_0+a\lambda_1+(2a+a^2)\lambda_2+(1+2a)\lambda_3-(2a+2a^2)\lambda_4-a^2\lambda_5-a\lambda_6
+(1+4a+a^2)\lambda_7+a^2\lambda_8+\lambda_9+a^2\lambda_{10},
\]
\[
c_3=a_3k^2+b_3=\lambda_0-(1+a)\lambda_1-(1+2a)\lambda_2-(2+a)\lambda_3+(1+4a+a^2)\lambda_4
\]
\[
+(2a+a^2)\lambda_5+(1+2a)\lambda_6
-(2+2a)\lambda_7-2a\lambda_8-2\lambda_9-(2a+2a^2)\lambda_{10}+a\lambda_{11},
\]
\[
c_4=a_4k^2+b_4=\lambda_1+\lambda_2+\lambda_3-(2+2a)\lambda_4-(1+2a)\lambda_5-(2+a)\lambda_6+\lambda_7+\lambda_8
+\lambda_9+(1+4a+a^2)\lambda_{10}-(1+a)\lambda_{11},
\]
\[
c_5=a_5k^2+b_5=\lambda_4+\lambda_5+\lambda_6-(2+2a)\lambda_{10}+\lambda_{11},\quad
c_6=a_6k^2+b_6=\lambda_{10}.
\]
It can be immediately seen that $\lambda_7$ and $\lambda_{10}$ are linear in $k^2$. By applying the Gauss-Jordan elimination method to 
the remaining equations involving the $\lambda_i$'s we discover that $\lambda_8$, $\lambda_9$, 
$\lambda_4+\lambda_5+\lambda_6+\lambda_{11}$, $\lambda_1+\lambda_2+\lambda_3+\lambda_5+a\lambda_6+(1+a)\lambda_{11}$, and $\lambda_0
-\lambda_3-a(\lambda_2+\lambda_5+\lambda_6+\lambda_{11})$ must be also linear in $k^2$. At this point the Bose invariant 
(\ref{BoseGHeun}) can equivalently be written as
\[
4y^2(y-1)^2(y-a)^2I(y)=\lambda_7(y-1)^2(y-a)^2+\lambda_8 y^2(y-a)^2+\lambda_9y^2(y-1)^2+\lambda_{10}y^2(y-1)^2(y-a)^2
\]
\[
+(\lambda_4+\lambda_5+\lambda_6+\lambda_{11})y(y-1)^2(y-a)^2+[\lambda_1+\lambda_2+\lambda_3+\lambda_5+a\lambda_6+(1+a)\lambda_{11}]y^2(y-1)(y-a)
\]
\[
[\lambda_0
-\lambda_3-a(\lambda_2+\lambda_5+\lambda_6+\lambda_{11})]y(y-1)(y-a). 
\]
Finally, if we set $\lambda_7=\eta_1 k^2+\xi_1$, $\lambda_8=\eta_2 k^2+\xi_2$, $\lambda_9=\eta_3 k^2+\xi_3$, 
$\lambda_{10}=\eta_4 k^2+\xi_4$, $\lambda_4+\lambda_5+\lambda_6+\lambda_{11}=\eta_5 k^2+\xi_5$, 
$\lambda_1+\lambda_2+\lambda_3+\lambda_5+a\lambda_6+(1+a)\lambda_{11}=\eta_6 k^2+\xi_6$, and $\lambda_0
-\lambda_3-a(\lambda_2+\lambda_5+\lambda_6+\lambda_{11})=\eta_7k^2+\xi_7$, where the $\eta_i,\xi_i$'s can be expressed in terms of the 
$a_i,b_i$'s introduced before, we can rewrite the Bose invariant as $I(y)=I_1(y)k^2+I_0(y)$ where
\[
I_1(y)=\frac{R(y)}{4y^2(y-1)^2(y-a)^2},\quad  I_0(y)=\frac{S(y)}{4y^2(y-1)^2(y-a)^2},
\]
and
\[
R(y)=\eta_1(y-1)^2(y-a)^2+\eta_2 y^2(y-a)^2+\eta_3y^2(y-1)^2+\eta_4y^2(y-1)^2(y-a)^2
\]
\[
+\eta_5 y(y-1)^2(y-a)^2+\eta_6y^2(y-1)(y-a)+\eta_7 y(y-1)(y-a),
\]
\[
S(y)=\xi_1(y-1)^2(y-a)^2+\xi_2 y^2(y-a)^2+\xi_3y^2(y-1)^2+\xi_4y^2(y-1)^2(y-a)^2
\]
\[
+\xi_5 y(y-1)^2(y-a)^2+\xi_6y^2(y-1)(y-a)+\xi_7 y(y-1)(y-a),
\]
At this point the derivation of the potential is the same as in Theorem~\ref{potenzialeHeun}~~$\square$
\end{proof}

\section{Conclusions}
The Schr\"odinger equation is one of the key equations in
Quantum Mechanics and plays a dominant role in many branches
of science and technology. The history of analytical
solvability of this equations has, of course, taken different
routes. The concept of shape invariance is one example of
a successful strategy \cite{shape}.
Another line of attacking the problem is to transform
a well known second order ordinary differential equation into
the ODSE obtaining hereby a class of potentials.
Such a procedure has been followed several times in the
literature, mostly with the hypergeometric equation as a
starting point. The Heun equation, its confluent forms and the
generalized Heun equation offer the next
possible generalization in this procedure. The class of Heun
potentials allowing a transformation from the ODSE to one
of the Heun equations has been derived in the paper. 
To appreciate how general these potentials are
we mention that the potentials of Iwata class
have five free parameters. The Natanzon class
contains six free parameters. In comparison, the
Heun class has as many as ten free parameters
(with the exception of the biconfluent Heun class which has eight
parameters). The class of potentials of the
generalized Heun equation has even fourteen free parameters.
This fact will increase the number of
possible applications.
It is therefore worthwhile to extend the present
study to consider the eigenvalue problem for bound states and other 
related
problems. We intend to cover this and other topics 
in a forthcoming publication. 

\begin{acknowledgments}
One of the authors (D.B.) thanks Prof. Gian Michele Graf for the fruitful and stimulating discussions he had during his visit at the 
Institute of Theoretical Physics, ETH Zurich, Switzerland.
\end{acknowledgments}


\end{document}